\def\N{\mathbb{N}}
\def\Z{\mathbb{Z}}
\def\K{\mathbb{K}}
\newcommand {\dd}{\,{\rm div}\,}
\newcommand {\ord}{\mathop{\rm ord}}
\newcommand {\cB} {{\cal B}}
\newcommand {\cP} {{\cal P}}
\newcommand {\cC} {{\cal C}}
\newcommand {\cL} {{\cal L}_\cB}
\newcommand {\cR} {{\cal R}_\cB}
\newcommand {\cE} {{\cal E}}
\def\KK{\K[[\cB]]}
\newcommand{\qed}{\hfill $\square$ \smallskip}
\newtheorem{theorem}{\bf Theorem}
\newtheorem{corollary}{\bf Corollary}
\newtheorem{proposition}{\bf Proposition}
\newtheorem{remark}{\bf Remark}
\newtheorem{definition}{\bf Definition}
\newtheorem{example}{\bf Example}
\date{}
\author
{Marko Petkov\v sek \\
 Faculty of Mathematics and Physics, University of Ljubljana\\
Institute of Mathematics, Physics and Mechanics, Ljubljana\\
marko.petkovsek@fmf.uni-lj.si}
\title
{Definite Sums as Solutions of Linear Recurrences With Polynomial Coefficients}
\begin{document}
\maketitle
\begin{abstract}
We present an algorithm which, given a linear recurrence operator $L$ with polynomial coefficients, $m \in \N\setminus\{0\}$,
$a_1,a_2,\ldots,a_m \in  \N\setminus\{0\}$ and $b_1,b_2,\ldots,b_m \in \K$, returns a linear recurrence operator $L'$ with rational coefficients such that for every sequence $h$,
\[
L\left(\sum_{k=0}^\infty \prod_{i=1}^m \binom{a_i n + b_i}{k} h_k\right) = 0
\]
if and only if $L' h = 0$.
\end{abstract}

\textbf{Keywords:} definite hypergeometric sums; (formal) polynomial series; binomial-coefficient bases; solutions of linear recurrences

\smallskip
MSC (2010)  68W30; 33F10

\section{Introduction}

Holonomic sequences are, by definition, given by a homogeneous linear recurrence with polynomial coefficients (and suitable initial conditions). Often one wishes to find an \textit{explicit representation} of a holonomic sequence, so algorithms have been devised to find solutions of such recurrences within some class of explicitly representable sequences (such as, e.g., polynomial \cite{A89poly}, rational \cite{A89rat, unden}, hypergeometric \cite{hyper}, d'Alembertian \cite{AP94}, or Liouvillian \cite{HS} sequences). These classes do not exhaust all explicitly representable holonomic sequences. For example, every definite hypergeometric sum on which Zeilberger's \textit{Creative Telescoping} algorithm \cite{Zeil90, Zeil91} succeeds is a holonomic sequence, but many such sequences are not Liouvillian. Therefore it makes sense to consider what one might call the \textit{inverse Zeilberger's problem}: given a homogeneous linear recurrence with polynomial coefficients, find its solutions representable as definite sums from a certain class.

Here we make a (modest) first step in this direction by providing an algorithm which, given a linear recurrence operator $L$ with polynomial coefficients and a fixed product of binomial coefficients of the form
\begin{equation}
\label{binom}
F(n,k) \ =\ \prod_{i=1}^m \binom{a_i n + b_i}{k}
\end{equation}
with $a_i \in \N \setminus \{0\}$, $b_i \in \K$, returns a linear recurrence operator $L'$ with rational coefficients such that for any sequence $y$ of the form $y_n = \sum_{k=0}^\infty F(n,k) h_k$, we have $L y = 0$ if and only if $L' h = 0$. This enables us to find all such solutions $y$ where $h$ belongs to a class of holonomic sequences with a known algorithm for converting from recursive to explicit representation (such as those mentioned in the preceding paragraph).  We regard $\sum_{k=0}^\infty F(n,k) h_k$ as a formal series, but note that it is terminating if at least one of the $b_i$ is an integer.

Throughout the paper, $\N = \{0,1,2,\ldots\}$ denotes the set of nonnegative integers, $\K$ a field of characteristic zero, $\K^{\N}$ the set of all sequences with terms from $\K$, $\K[x]$ the $\K$-algebra of univariate polynomials over $\K$, and ${\cal L}_{\K[x]}$ the $\K$-algebra of linear operators $L: \K[x] \rightarrow \K[x]$. 

\begin{definition}
 Let  $m \in \N \setminus \{0\}$ and $j \in \{0,1,\ldots,m-1\}$.
\begin{itemize}
\item A sequence $c \in \K^{\N}$ is called the \emph{$j$-th $m$-section} of $a  \in \K^{\N}$ if $c_k = a_{mk+j}$ for all $k \in \N$. We say that $c$ is obtained from $a$ by \emph{multisection}, and denote it by $s_j^m a$.
\item A sequence $c \in \K^{\N}$ is called the \emph{interlacing} of $a^{(0)},a^{(1)}, \ldots, a^{(m-1)} \in \K^{\N}$ if  $c_k = a^{(k \bmod m)}_{k \dd m}$ for all $k \in \N$. \qed
\end{itemize}
\end{definition}

\section{Formal polynomial series}
 
The power series method is a time-honored approach to solving differential equations by reducing them to  recurrences satisfied by the coefficient sequences of their power series solutions. In \cite{APR} it was shown how, by generalizing the notion of formal \emph{power series} to formal \emph{polynomial series}, one can use this method to find solutions of other linear operator equations such as $q$-difference equations, and recurrence equations themselves, which interests us here. In this section we summarize some relevant definitions, examples and results from \cite{APR}.

\begin{definition}
\rm
A sequence of polynomials $\cB = \langle P_k(x) \rangle_{k=0}^{\infty}$ from $\K[x]$ is a \emph{factorial basis} of $\K[x]$ if for all $k \in \N$:
\begin{enumerate}
\item[\bf P1.] $\deg P_k = k$, 
\item[\bf P2.] $P_k \,|\, P_{k+1}$. \qed
\end{enumerate}
\end{definition}
\begin{definition}
\rm
A factorial basis $\cB$ of $\K[x]$ is \emph{compatible} with an operator $L\in {\cal L}_{\K[x]}$ if there are $A, B\in\N$, and $\alpha_{k,i}\in \K$ for $k \in \N$, $-A \le i \le B$, such that
\begin{equation}
\label{main}
LP_k = \sum_{i=-A}^B \alpha_{k,i}\;P_{k+i}\ \ {\rm for\ all\ } k \in \N,
\end{equation}
with $P_j = 0$ when $j < 0$. To assert that (\ref{main}) holds for specific $A, B \in \N$, we will say that $\cB$ is \emph{$(A,B)$-compatible} with $L$. \qed 
\end{definition}

\begin{proposition}
\label{compat}
A factorial basis $\cB$ of $\K[x]$ is compatible with $L\in {\cal L}_{\K[x]}$ if and only if there are $A, B\in\N$ such that
\begin{enumerate}
\item[\bf C1.] $\deg L P_k \le k + B$ for all $k \ge 0$,
\item[\bf C2.] $P_{k-A} \,|\, L P_k$\ for all $k  \ge A$.
\end{enumerate}
\end{proposition}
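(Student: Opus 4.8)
The plan is to prove both implications, using two elementary consequences of the factorial-basis axioms. First, since $\deg P_k = k$ by P1, the polynomials $P_0,P_1,\ldots$ have pairwise distinct degrees and therefore form a $\K$-vector-space basis of $\K[x]$; in particular every $p\in\K[x]$ has a unique expansion $p=\sum_j \beta_j P_j$ in which no index $j$ with $\beta_j\neq 0$ exceeds $\deg p$. Second, iterating P2 yields the divisibility chain $P_j \,|\, P_k$ whenever $j\le k$. These two facts are the only tools I expect to need.

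For the forward direction, assume $\cB$ is $(A,B)$-compatible with $L$, so $LP_k=\sum_{i=-A}^B \alpha_{k,i}P_{k+i}$. Then $\deg LP_k$ is at most the largest degree appearing on the right, namely $\deg P_{k+B}=k+B$, which gives C1. For C2 with $k\ge A$, every summand $P_{k+i}$ has index $k+i\ge k-A$, so by the divisibility chain $P_{k-A}\,|\,P_{k+i}$ for each $i$; hence $P_{k-A}$ divides the whole sum $LP_k$, which is C2.

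For the converse, assume C1 and C2. By C1 and the unique-expansion fact I may write $LP_k=\sum_{j=0}^{k+B}\beta_{k,j}P_j$, and it suffices to prove $\beta_{k,j}=0$ for all $j<k-A$, for then relabelling $\alpha_{k,i}:=\beta_{k,k+i}$ yields (\ref{main}). When $k<A$ there is nothing to prove, since every index $j\ge 0$ already satisfies $j\ge k-A$ and C1 alone places $LP_k$ in the span of $P_0,\ldots,P_{k+B}$. When $k\ge A$, I split $LP_k=\bigl(\sum_{j=0}^{k-A-1}\beta_{k,j}P_j\bigr)+\bigl(\sum_{j=k-A}^{k+B}\beta_{k,j}P_j\bigr)$. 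By the divisibility chain $P_{k-A}$ divides every term of the second bracket, and by C2 it divides $LP_k$, so it divides the first bracket as well. But the first bracket has degree at most $k-A-1$, strictly less than $\deg P_{k-A}=k-A$, and a nonzero multiple of $P_{k-A}$ cannot have degree below $\deg P_{k-A}$; hence the first bracket vanishes. Linear independence of $P_0,\ldots,P_{k-A-1}$ then forces $\beta_{k,j}=0$ for all $j<k-A$, as required.

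The one genuinely load-bearing step — the crux I would single out — is the closing degree-versus-divisibility argument in the converse: combining P1 and P2 to conclude that a polynomial which is both divisible by $P_{k-A}$ and of degree below $\deg P_{k-A}$ must be zero. Everything else is bookkeeping about the basis expansion and the index ranges, including the mild boundary care needed when $k<A$, where C2 is vacuous and C1 does all the work.
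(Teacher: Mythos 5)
Your proof is correct and follows essentially the same route as the paper: the crux in both is that the low-index part $\sum_{j=0}^{k-A-1}\beta_{k,j}P_j$ of the expansion is divisible by $P_{k-A}$ (via \textbf{C2} together with \textbf{P2}) yet has degree below $\deg P_{k-A}=k-A$, hence vanishes. The only differences are cosmetic: you spell out the forward direction, which the paper dismisses as obvious, and you treat the boundary case $k<A$ explicitly, which the paper handles implicitly via empty sums.
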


\begin{proof}
Necessity of these two conditions  is obvious. For sufficiency, let 
\[
L P_k = \sum_{j=0}^{\deg L P_k} \lambda_{j,k}P_j
\]
be the expansion of $L P_k$ w.r.t.\ $\cB$. By \textbf{C1}, we can replace the upper summation bound by $ k+B$. Rewriting the resulting equation as
\[
LP_k\ - \sum_{j=k-A}^{k+B} \lambda_{j,k} P_j\ =  \sum_{j=0}^{k-A-1} \lambda_{j,k}P_j,
\]
we see  by \textbf{C2} and \textbf{P2} that $P_{k-A}$ divides the left-hand side, while the right-hand side is of degree less than $k-A = \deg P_{k-A}$. Hence both sides vanish, and so
\[
LP_k\ = \sum_{j=k-A}^{k+B} \lambda_{j,k} P_j\ = \sum_{i=-A}^{B} \lambda_{k+i,k} P_{k+i}\ = \sum_{i=-A}^{B} \alpha_{k,i} P_{k+i}
\]
where $ \alpha_{k,i} := \lambda_{k+i,k}$. This proves compatibility of $\cB$ with $L$. \qed
\end{proof}

\begin{example}
\label{basis}
\rm
 \cite[Ex.\,1]{APR} Let
\[
\cP \ =\ \bigg\langle x^k\bigg\rangle_{k=0}^{\infty} 
\]
be the {\em power basis}, and
\[
\cC \ =\ \left\langle {x\choose k}\right\rangle_{k=0}^{\infty} 
\]
the {\em binomial-coefficient basis} of $\K[x]$, respectively. Clearly, both $\cP$ and $\cC$  are factorial bases. Further, let $D$, $E$, $Q$, $X \in {\cal L}_{\K[x]}$  be the \emph{differentiation}, \emph{shift}, \emph{q-shift}, and \emph{multiplication-by-the-independent-variable operators}, respectively, acting on $p \in \K[x]$ by
\[
\begin{array}{lll}
D p(x) &=& p'(x), \\
E p(x) &=& p(x+1), \\
Q p(x) &=& p(qx), \\
X p(x) &=& x p(x)
\end{array}
\]
where $q \in \K^*$ is not a root of unity. Then:
\begin{itemize}
\item $\cP$ is (1,0)-compatible with $D$ (take $\alpha_{k,-1}=k$, $\alpha_{k,0} = 0$),
\item $\cC$ is (1,0)-compatible with $E$ (take $\alpha_{k,-1}= \alpha_{k,0} = 1$),
\item $\cP$ is (0,0)-compatible with $Q$ (take $\alpha_{k,0}=q^k$), 
\item \emph{every} factorial basis is (0,1)-compatible with $X$. \qed
\end{itemize}

\end{example}

Let $\cB = \langle P_k(x) \rangle_{k=0}^{\infty}$ be a factorial basis, and let $\ell_k : \K[x] \rightarrow \K$ for $k \in \N$ be linear functionals such that $\ell_k(P_m)=\delta_{k,m}$ for all $k, m \in \N$ (i.e.,  $\ell_k(p)$ is the coefficient of $P_k$ in the expansion of $p \in \K[x]$ w.r.t.\ $\cB$). Property {\bf P2} implies that $\ell_k(P_j P_m) = 0$ when $k < \max \{j,m\}$. Therefore $\K[x]$ 
naturally embeds into the algebra $\KK$ of \emph{formal polynomial series} of the form 
\begin{equation}
\label{series}
y(x) = \sum_{k=0}^{\infty} c_k P_k(x) \qquad (c_k\in \K),
\end{equation}
with multiplication defined by 
\[
\left(\sum_{k=0}^{\infty} c_k P_k(x)\right)
\left(\sum_{k=0}^{\infty} d_k P_k(x)\right)
\ =\ \sum_{k=0}^{\infty} e_k P_k(x),
\]
\[
e_k\ = \sum_{\max\{i,j\} \le k \le i+j} c_i d_j \,\ell_k(P_i P_j).
\]
Let $\cB$ be compatible with $L \in {\cal L}_{\K[x]}$. Using (\ref{main}), extend $L$ to $\KK$ by defining
\begin{eqnarray}
L \sum_{k=0}^{\infty} c_k P_k(x)  &:=&   \sum_{k=0}^{\infty} c_k L P_k(x) 
\ =\ \sum_{k=0}^{\infty} c_k \sum_{i=-A}^B \alpha_{k,i} P_{k+i}(x) \nonumber \\
&=& \sum_{i=-A}^B \sum_{k=0}^{\infty} \alpha_{k,i} c_{k} P_{k+i}(x) 
\ =\ \sum_{i=-A}^B \sum_{k=i}^{\infty} \alpha_{k-i,i} c_{k-i} P_{k}(x) \nonumber \\
&=& \sum_{k=-A}^{\infty} \sum_{i=-A}^{\min(k,B)} \alpha_{k-i,i} c_{k-i} P_{k}(x) 
\ =\ \sum_{k=0}^{\infty} \sum_{i=-A}^B \alpha_{k-i,i} c_{k-i} P_k(x) \nonumber \\
&=& \sum_{k=0}^{\infty}\left( \sum_{i=-B}^A \alpha_{k+i,-i} c_{k+i}\right) P_k(x) \label{Ly}
\end{eqnarray}
with $A, B, \alpha_{k,i}$ as in (\ref{main}), using $P_k(x) = 0$ for $k < 0$ and $c_{k-i} = 0$ for $i > k$ in the next to last equality. By this definition we have:

\begin{proposition}
\label{RBL}
A formal polynomial series $y(x) = \sum_{k=0}^{\infty} c_k P_k(x) \in\KK$ satisfies $Ly=0$ if and only if its 
coefficient sequence $\langle c_k\rangle_{k\in\Z}$ satisfies the recurrence
\begin{equation}
\label{r1}
\sum_{i=-B}^A \alpha_{k+i,-i} c_{k+i} = 0 \qquad ({\rm for\ all\ } k \ge 0)
\end{equation}
where $c_k=0$ when $k<0$.
\end{proposition}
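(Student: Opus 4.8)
The plan is to read off the result directly from the expansion (\ref{Ly}), which has already done all the computational work of regrouping $Ly$ into the basis $\cB$. The only genuine mathematical content remaining is the fact that a formal polynomial series is zero if and only if all of its coefficients with respect to $\cB$ vanish; everything else is bookkeeping about the index shifts already carried out above.

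First I would recall why the coefficients in a representation $\sum_k c_k P_k(x)$ are uniquely determined. By property {\bf P1} we have $\deg P_k = k$, so $\{P_0, P_1, \ldots, P_N\}$ is a $\K$-basis of the polynomials of degree at most $N$; hence the $P_k$ are $\K$-linearly independent, and the functionals $\ell_k$ extract the unique coefficient of $P_k$ in any expansion. Consequently an element of $\KK$ equals $0$ precisely when every one of its coefficients is $0$.

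Next I would apply this to $Ly$. By (\ref{Ly}) the expansion of $Ly$ with respect to $\cB$ is
\[
L y \ =\ \sum_{k=0}^{\infty} \left( \sum_{i=-B}^A \alpha_{k+i,-i}\, c_{k+i} \right) P_k(x),
\]
where I continue to use the conventions $P_j = 0$ and $c_j = 0$ for $j < 0$ that were needed to justify the index manipulations in (\ref{Ly}). The coefficient of $P_k$ in this expansion is exactly the left-hand side of (\ref{r1}). By the uniqueness just discussed, $Ly = 0$ holds in $\KK$ if and only if each of these coefficients vanishes, i.e.\ if and only if $\sum_{i=-B}^A \alpha_{k+i,-i}\, c_{k+i} = 0$ for every $k \ge 0$, which is precisely the recurrence (\ref{r1}).

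I do not expect a serious obstacle here: since (\ref{Ly}) is already available, the argument reduces to the standard fact that representations in a factorial basis are unique. The only points requiring a little care are the sign and index conventions ($c_k = 0$ and $P_k = 0$ for $k < 0$) inherited from the derivation of (\ref{Ly}), and checking that the reindexing to the summation bounds $-B \le i \le A$ matches the recurrence as stated.
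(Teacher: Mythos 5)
Your proposal is correct and takes essentially the same route as the paper, which states Proposition \ref{RBL} as an immediate consequence of the expansion (\ref{Ly}) (``By this definition we have:'') without further argument. Your explicit justification of coefficient uniqueness via \textbf{P1} and your bookkeeping of the conventions $c_k = 0$, $P_k = 0$ for $k < 0$ merely spell out what the paper leaves implicit.
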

Hence $L$ naturally induces a recurrence operator
\begin{equation}
\label{RL}
\cR L\ := \sum_{i=-B}^A \alpha_{n+i,-i} E_n^i
\end{equation}
where $E_n$ is the shift operator w.r.t.\ $n$ ($E_n^i c_n = c_{n+i}$ for all $i\in\Z$).

\begin{example}\rm
\label{subst}
Using (\ref{RL}) we read off from Example \ref{basis} that
\begin{eqnarray*}
{\cal R}_{\cP} D &=& (n+1)E_n, \\
{\cal R}_{\cC} E\, &=& E_n+1, \\
{\cal R}_{\cP} Q &=& q^n,
\end{eqnarray*}
while  $x\, x^n = x^{n+1}$ and $x\, {x\choose n} = (n+1){x\choose n+1} + n {x\choose n}$ imply by (\ref{main}) and (\ref{RL}) that
\begin{eqnarray*}
{\cal R}_{\cP} X &=& E_n^{-1}, \\
{\cal R}_{\cC} X\,\! &=& n(E_n^{-1}+1). 
\end{eqnarray*}
\qed
\end{example}

\noindent
\textbf{Notation:}
\begin{itemize}
\item For any factorial basis $\cB$ of $\K[x]$, denote by $\sigma_\cB$ the map $\K[[\cB]] \to \K^{\Z}$ assigning to $y(x) = \sum_{k=0}^\infty c_k P_k(x) \in \K[[\cB]]$ its coefficient sequence $c = \langle c_k\rangle_{k\in\N}$ extended to $\langle c_k\rangle_{k\in\Z}$ by taking $c_k = 0$ whenever $k<0$. We will omit the subscript $\cB$ if it is clear from the context.
\item For any factorial basis $\cB$ of $\K[x]$,  denote by $\cL$ the set of all operators $L
\in {\cal L}_{\K[x]}$ such that  $\cB$ is compatible with $L$.
\item Denote by $\cE$ the $\K$-algebra of recurrence operators of the form $L' = \sum_{i=-S}^R a_i(n) E_n^i$ with $R,S\in\N$ and $a_i: \Z \rightarrow \K$ for $-S \le i \le R$, acting on the $\K$-algebra of all two-way infinite sequences $\K^\Z$. 
\end{itemize}

\begin{proposition}
\label{sigmaLy}
{\rm \cite[Prop.\,1]{APR}}
For any $L \in \cL$ and $y \in \K[[\cB]]$, we have 
\[
\sigma_\cB(Ly) = (\cR L) \sigma_\cB y.
\]
\end{proposition}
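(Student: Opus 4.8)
The plan is to prove the identity by unwinding the two relevant definitions and matching coefficients, since both the extension of $L$ to $\KK$ in (\ref{Ly}) and the recurrence operator $\cR L$ in (\ref{RL}) were constructed precisely so as to track the action of $L$ on coefficient sequences. Thus the statement should reduce to a termwise comparison of the sequence $\sigma_\cB(Ly)$ with the sequence $(\cR L)\sigma_\cB y$, with the only real work being careful bookkeeping of the index ranges.

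First I would fix $y(x)=\sum_{k=0}^{\infty}c_kP_k(x)\in\KK$ and read off from (\ref{Ly}) that $Ly=\sum_{k=0}^{\infty}d_kP_k(x)$, where $d_k=\sum_{i=-B}^{A}\alpha_{k+i,-i}\,c_{k+i}$ for $k\ge 0$. By the definition of $\sigma_\cB$, the sequence $\sigma_\cB(Ly)\in\K^{\Z}$ then has $n$-th entry equal to $d_n$ for $n\ge 0$ and equal to $0$ for $n<0$. Next I would compute the right-hand side: writing $c=\sigma_\cB y$ and applying (\ref{RL}) together with $E_n^i c_n=c_{n+i}$, the $n$-th entry of $(\cR L)c$ is $\sum_{i=-B}^{A}\alpha_{n+i,-i}\,c_{n+i}$. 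For every $n\ge 0$ this is literally the formula for $d_n$, so the two sequences agree on all of $\N$.

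The one point requiring care — and the step I expect to be the main, if minor, obstacle — is the behaviour at negative indices, where $\sigma_\cB(Ly)_n$ is zero by the convention built into $\sigma_\cB$, whereas the expression $\sum_{i=-B}^{A}\alpha_{n+i,-i}\,c_{n+i}$ for $((\cR L)c)_n$ is not visibly zero. Here I would appeal to the natural normalization that $\alpha_{k,i}=0$ whenever $k+i<0$, which is forced by the fact that $\alpha_{k,i}$ is the coefficient of $P_{k+i}$ in the expansion of $LP_k$ and that $P_j=0$ for $j<0$ (the same convention already implicit in (\ref{main}) and used in deriving (\ref{Ly})). In the coefficient of $E_n^i$ the two indices are $k+i \mapsto n+i$ and $i \mapsto -i$, whose sum is $n$; hence for $n<0$ every factor $\alpha_{n+i,-i}$ vanishes under this convention, giving $((\cR L)c)_n=0=\sigma_\cB(Ly)_n$. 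With the two cases $n\ge 0$ and $n<0$ settled, the sequences coincide throughout $\Z$, which is the claimed equality.
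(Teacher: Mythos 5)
Your proof is correct and follows essentially the same route as the paper, which establishes this result by the computation in (\ref{Ly}) defining the extension of $L$ to $\KK$ and then comparing coefficients termwise (the paper itself defers the formal statement to \cite[Prop.\,1]{APR} rather than reproving it). Your explicit handling of negative indices via the normalization $\alpha_{k,i}=0$ whenever $k+i<0$ is exactly the convention already implicit in (\ref{main}) and in the derivation of (\ref{Ly}), so your write-up is a legitimate, slightly more careful rendering of the same argument.
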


\begin{theorem}
\label{iso}
{\rm \cite[Prop.\,2 \& Thm.\,1]{APR}}
$\cL$ is a $\K$-algebra, and the transformation
\[
\cR: \cL \rightarrow \cE,
\]
defined in {\rm (\ref{RL})}, is an isomorphism of $\K$-algebras.
\end{theorem}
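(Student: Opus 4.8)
The plan is to prove the statement in two stages: first that $\cL$ is a $\K$-subalgebra of ${\cal L}_{\K[x]}$, and then that $\cR$ is a $\K$-linear bijection respecting products. Closure of $\cL$ under addition and scalar multiplication is immediate: if $\cB$ is $(A_1,B_1)$-compatible with $L_1$ and $(A_2,B_2)$-compatible with $L_2$, then expanding $(\lambda L_1 + \mu L_2)P_k$ via (\ref{main}) exhibits $(\max\{A_1,A_2\},\max\{B_1,B_2\})$-compatibility, and the identity is $(0,0)$-compatible so $\id\in\cL$. The only substantive closure property is composition, and here I would lean on Proposition \ref{compat} rather than manipulate coefficients directly. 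For \textbf{C1}: applying $L_1$ to the expansion of $L_2P_k$ and using $\deg L_1P_j\le j+B_1$ gives $\deg L_1L_2P_k\le k+B_1+B_2$. For \textbf{C2}: every $P_{k+i}$ appearing in $L_2P_k$ has $i\ge -A_2$, and every $P_{k+i+j}$ appearing in $L_1L_2P_k$ has $i\ge -A_2$, $j\ge -A_1$, hence is divisible by $P_{k-A_1-A_2}$ by \textbf{P2}. Thus $\cB$ is $(A_1+A_2,B_1+B_2)$-compatible with $L_1L_2$, so $L_1L_2\in\cL$.

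For the map itself, $\K$-linearity of $\cR$ is read straight off (\ref{RL}). Bijectivity I would settle by an explicit inverse. Injectivity: the coefficient of $E_n^j$ in $\cR L$ is the function $n\mapsto\alpha_{n+j,-j}$, from which all $\alpha_{k,i}$, and hence $L$ via (\ref{main}), are recovered. Surjectivity: given $L'=\sum_{i=-S}^R a_i(n)E_n^i\in\cE$, set $\alpha_{k,i}:=a_{-i}(k+i)$ and let $L$ be the operator determined on $\cB$ by (\ref{main}) with $A=R$, $B=S$; then $\cB$ is $(R,S)$-compatible with $L$ and $\cR L=L'$ by construction. So $\cR$ is a $\K$-linear bijection, and it sends $\id$ to the identity of $\cE$ since $\cR\,\id=1$.

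The crux, where I expect the real work, is multiplicativity $\cR(L_1L_2)=(\cR L_1)(\cR L_2)$. The conceptual route is to apply Proposition \ref{sigmaLy} twice with associativity of composition: for every $y\in\KK$,
\[
\cR(L_1L_2)\,\sigma_\cB y = \sigma_\cB((L_1L_2)y) = \sigma_\cB(L_1(L_2y)) = (\cR L_1)\,\sigma_\cB(L_2y) = (\cR L_1)(\cR L_2)\,\sigma_\cB y,
\]
so the two recurrence operators agree on $\sigma_\cB(\KK)$. The obstacle is exactly that this image consists only of sequences supported on $\N$: testing on the delta sequences $e^{(m)}$, $m\ge 0$, pins down each coefficient function $a_i(n)$ only for $n\ge -i$, i.e.\ precisely on the range of indices that the recurrence (\ref{r1}) ever evaluates. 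I would close this gap in one of two ways. Route (i): adopt the convention $\alpha_{k,i}=0$ for $k<0$ (matching $P_k=0$ for $k<0$), so both operators have coefficient functions determined on all of $\Z$ and the test sequences suffice. Route (ii): bypass $\sigma_\cB$ and verify the identity at coefficient level. Collecting $L_1L_2P_k=\sum_\ell\big(\sum_i\alpha^{(2)}_{k,i}\alpha^{(1)}_{k+i,\ell-i}\big)P_{k+\ell}$ gives $\alpha^{(12)}_{k,\ell}=\sum_i\alpha^{(2)}_{k,i}\alpha^{(1)}_{k+i,\ell-i}$; substituting into (\ref{RL}) and matching the shift rule $E_n^p a(n)=a(n+p)E_n^p$ against $(\cR L_1)(\cR L_2)$ reproduces the coefficient of $E_n^\ell$ after the change of summation index $\ell=p+q$, $i=-q$. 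Route (i) is cleaner but commits to the zero-extension convention, whereas route (ii) is convention-free but is pure bookkeeping whose only risk is keeping the double substitution straight. Either route yields multiplicativity, and together with the preceding two stages this establishes that $\cR$ is an isomorphism of $\K$-algebras.
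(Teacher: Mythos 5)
The paper never proves this statement: it is imported verbatim from \cite{APR} (Prop.~2 and Thm.~1 there), so there is no in-paper proof to compare against, and your argument has to stand on its own. On the whole it does, and it is essentially the standard proof of the cited result. The closure of $\cL$ under composition via \textbf{C1}/\textbf{C2} of Proposition \ref{compat} is correct, and it is a nice observation that, unlike Theorem \ref{prod}(2), no endomorphism hypothesis is needed, because divisibility is checked termwise through \textbf{P2} on the basis elements $P_{k+i+j}$ with $k+i+j \ge k-A_1-A_2$. Your composition formula $\alpha^{(12)}_{k,\ell}=\sum_i\alpha^{(2)}_{k,i}\alpha^{(1)}_{k+i,\ell-i}$ is right, and I checked that your route (ii) index bookkeeping does reproduce the coefficient of $E_n^\ell$ in $(\cR L_1)(\cR L_2)$; route (i) also closes the gap you correctly identified in the $\sigma_\cB$-based argument, once one verifies (easy) that operators whose $E_n^\ell$-coefficient vanishes for $n+\ell<0$ form a subalgebra.

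The one soft spot is that the convention issue you isolated in the multiplicativity step does not go away in the surjectivity step, and there you glossed it. Note first that $\cR$ is not even well defined without a convention: in (\ref{main}) the coefficients $\alpha_{k,i}$ with $k+i<0$ multiply $P_{k+i}=0$ and are not determined by $L$, yet (\ref{RL}) evaluates $\alpha_{n+i,-i}$ at exactly such index pairs when $n<0$. The convention forced by Proposition \ref{sigmaLy} (so that $\sigma_\cB(Ly)=(\cR L)\sigma_\cB y$ holds at negative positions as an identity in $\K^\Z$) is $\alpha_{k,i}=0$ whenever $k<0$ or $k+i<0$. But under that convention --- your route (i) --- the image of $\cR$ is the \emph{proper} subalgebra of $\cE$ consisting of operators whose coefficient of $E_n^i$ vanishes for $n<0$ or $n+i<0$, and your claim that ``$\cR L = L'$ by construction'' fails for a generic $L'\in\cE$ with $a_i:\Z\to\K$ arbitrary: the values $a_i(n)$ with $n+i<0$ or $n<0$ are unreachable. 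So with $\cE$ taken literally as acting faithfully on all of $\K^\Z$, $\cR$ is an injective homomorphism onto that subalgebra, not onto $\cE$. The isomorphism statement is true under the reading actually used in \cite{APR} and throughout this paper, namely that operators in $\cE$ are identified when they agree on sequences vanishing at negative indices, i.e., on $\sigma_\cB(\KK)$ --- which is anyway the only part of the action the recurrence (\ref{r1}) ever sees. Under that reading your explicit inverse does establish surjectivity, and your proof is complete; you should just state the identification (or the restricted codomain) explicitly rather than invoking the zero-extension convention in one step and arbitrary coefficient functions in another.
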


By Theorem \ref{iso} and Examples \ref{basis} and \ref{subst}, every linear recurrence operator $L \in \K[x]\langle E\rangle$ is compatible with the basis $\cC \ =\ \left\langle {x\choose n}\right\rangle_{n=0}^{\infty}$, and to compute the associated operator $L' = {\cal R}_{\cC}L \in \cE$, it suffices to apply the substitution
\[
\begin{array}{lll}
E & \mapsto & E_n + 1, \\
x & \mapsto & n(E_n^{-1} + 1)
\end{array}
\]
to all terms of $L$. Every $h \in \ker L'$ then gives rise to a solution $y_n = \sum_{k=0}^\infty \binom{n}{k} h_k$ of $Ly = 0$.

\begin{example}\rm
In this example we list some operators  $L \in \K[n]\langle E\rangle$, their associated operators $L' = {\cal R}_{\cC}L \in \cE$, and some of the elements of their kernels.
\begin{enumerate}
\item $L = E - c$ where $c \in \K^*$: Here $L' = E_n - (c - 1)$, and
\[
y_n\ =\ \sum_{k=0}^\infty \binom{n}{k} (c - 1)^k\ =\ c^n
\]
is indeed a solution of $Ly = 0$.

\item $L = E^2 - 2 E + 1$: Here $L' = E_n^2$, and by {\rm (\ref{r1})}, any $h \in \ker L'$ satisfies $h_{n+2} = 0$ for all $n \ge 0$, or equivalently, $h_n = 0$ for all $n \ge 2$. Hence
\[
y_n\ =\ \sum_{k=0}^\infty \binom{n}{k} h_k\ =\ h_0 + h_1 n
\]
is indeed a solution of $Ly = 0$.

\item $L = E^2 - E - 1$: Here $L' = E_n^2 + E_n - 1$, and any $h \in\ker L'$ is of the form $h_n = (-1)^n (C_1 F_n + C_2 F_{n+1})$ where $C_1, C_2 \in \K$ and $F = \langle 0, 1, 1, 2, \ldots\rangle$ is the sequence of Fibonacci numbers. Hence every $y \in \ker L$ is of the form
\[
y_n\ =\ \sum_{k=0}^\infty \binom{n}{k}  (-1)^k (C_1 F_k + C_2 F_{k+1}).
\]
In particular, by setting $y_0 = F_0$ and $y_1 = F_1$, we discover the identity
\[
F_n\ =\ \sum_{k=0}^n \binom{n}{k}  (-1)^{k+1} F_k.
\]

\item $L = E - (n+1)$: Here $L' = E_n - n - n E_n^{-1}$, and the general solution of $L' h = h_{n+1} - n h_n - n h_{n-1} = 0$ is of the form $h_n = n! (C_1 + C_2 \sum_{k=1}^n (-1)^k/k!)$ where $C_1, C_2 \in \K$. The equation at $n=0$ implies $h_1 = 0$, forcing $C_1 = C_2 =: C$. Hence every $y \in \ker L$ is of the form
\[
y_n\ =\ C \sum_{k=0}^\infty \binom{n}{k} k!\left(1+\sum_{j=1}^k\frac{ (-1)^j}{j!}\right).
\]
In particular, by setting $y_0 = 0! = 1$, we discover the identity
\[
n!\ =\ \sum_{k=0}^n \binom{n}{k} k!\left(1+\sum_{j=1}^k\frac{ (-1)^j}{j!}\right).
\]
or equivalently,
\[
\sum_{k=0}^n \frac{1}{k!} \left(1+\sum_{j=1}^{n-k}\frac{ (-1)^j}{j!}\right)\ =\ 1.
\]

\item $L =  E^3 - ( n^2 + 6 n + 10)E^2 + (n + 2) (2 n + 5) E - (n + 1) (n + 2)$: Unlike in the preceding four cases, the equation $L y = 0$ has no nonzero Liouvillian solutions. Here $L' =  E_n^3 - ( n^2 + 6 n + 7)E_n^2 - (2 n^2 + 8 n + 7) E_n - (n + 1)^2$, and equation $L' h = 0$ has a hypergeometric solution $h_n = n!^2$. So 
\[
y_n\ =\  \sum_{k=0}^\infty \binom{n}{k} k!^2\ =\  \sum_{k=0}^n \binom{n}{k} k!^2
\]
is a non-Liouvillian, definite-sum solution of equation $L y = 0$. \qed

\end{enumerate}

\end{example}

\section{Products of compatible bases}

To be able to use formal polynomial series to find other definite-sum solutions of linear recurrence equations, not just those of the form $\sum_{k=0}^\infty \binom{n}{k} h_k$, we need to construct more factorial bases, compatible with the shift operator $E$. 

\begin{definition}
\label{def:cab} \rm
For $a \in \N \setminus \{0\}$, $b \in \K$, and for all $k \in \N$, let $P_k^{(a,b)}(x) := \binom{ax+b}{k}$. We denote the polynomial basis $\left\langle P_k^{(a,b)}(x) \right\rangle_{k=0}^{\infty}$ by  $\cC_{a,b}$, and call it a \emph{generalized binomial-coefficient basis} of $\K[x]$. \qed
\end{definition}

\begin{proposition}
\label{cab}
Any  generalized binomial-coefficient basis $\cC_{a,b}$ is a factorial basis of $\K[x]$, which is $(a,0)$-compatible with the shift operator $E$.
\end{proposition}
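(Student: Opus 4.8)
The plan is to verify the two factorial-basis axioms directly, and then to establish $(a,0)$-compatibility by exhibiting the expansion coefficients explicitly via the Vandermonde convolution. Writing
\[
P_k^{(a,b)}(x)\ =\ \binom{ax+b}{k}\ =\ \frac{1}{k!}\prod_{j=0}^{k-1}(ax+b-j)
\]
makes both axioms transparent. Property \textbf{P1} holds because this product has $k$ linear factors in $x$, each with leading coefficient $a \neq 0$, so $\deg P_k^{(a,b)} = k$ (with leading coefficient $a^k/k!$). Property \textbf{P2} follows from the one-step relation $P_{k+1}^{(a,b)}(x) = \frac{ax+b-k}{k+1}\,P_k^{(a,b)}(x)$, which exhibits $P_{k+1}^{(a,b)}$ as a polynomial multiple of $P_k^{(a,b)}$.

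For compatibility, I would first compute the action of the shift operator,
\[
E P_k^{(a,b)}(x)\ =\ P_k^{(a,b)}(x+1)\ =\ \binom{a(x+1)+b}{k}\ =\ \binom{(ax+b)+a}{k},
\]
so that, abbreviating $y = ax+b$, the task reduces to expanding $\binom{y+a}{k}$ in terms of the $\binom{y}{k-j}$. This is precisely the Vandermonde convolution
\[
\binom{y+a}{k}\ =\ \sum_{j=0}^{a}\binom{a}{j}\binom{y}{k-j},
\]
which, after substituting back $y = ax+b$ and reindexing with $i = -j$, yields
\[
E P_k^{(a,b)}\ =\ \sum_{i=-a}^{0}\binom{a}{-i}\,P_{k+i}^{(a,b)},
\]
i.e.\ exactly the defining relation (\ref{main}) with $A = a$, $B = 0$ and the (notably $k$-independent) coefficients $\alpha_{k,i} = \binom{a}{-i}$.

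The one point that needs care is the justification of the Vandermonde identity here, since $y = ax+b$ is not an integer. For a fixed $a \in \N$, both sides are polynomials in $y$ of degree $k$ (using the convention $\binom{y}{k-j}=0$ when $k-j<0$, which is consistent with $P_{k+i}=0$ for $k+i<0$ in (\ref{main})), and they agree at every $y \in \N$ by the classical combinatorial identity; agreement at infinitely many points forces equality as polynomials, so the identity persists under the substitution $y = ax+b$. I expect this routine polynomial-identity argument to be the only genuinely nontrivial step; everything else is bookkeeping. As an alternative one could instead verify conditions \textbf{C1}--\textbf{C2} of Proposition \ref{compat}, checking $\deg E P_k^{(a,b)} \le k$ and $P_{k-a}^{(a,b)} \mid E P_k^{(a,b)}$ (the latter because the quotient $\binom{y+a}{k}/\binom{y}{k-a}$ equals $(y+1)(y+2)\cdots(y+a)/\bigl(k(k-1)\cdots(k-a+1)\bigr)$, a polynomial in $y$), but the Vandermonde route is preferable since it produces the coefficients $\alpha_{k,i}$ in closed form.
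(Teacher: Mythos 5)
Your proof is correct and follows essentially the same route as the paper's: the one-step relation $P_{k+1}^{(a,b)} = \frac{ax+b-k}{k+1}P_k^{(a,b)}$ for the factorial-basis axioms, and the Chu--Vandermonde expansion yielding $\alpha_{k,i} = \binom{a}{-i}$ for $(a,0)$-compatibility. Your extra step justifying Vandermonde for the non-integer argument $y = ax+b$ via agreement of two degree-$k$ polynomials at infinitely many points is a welcome bit of rigor that the paper leaves implicit.
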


\begin{proof}
Clearly $\deg_x P_k^{(a,b)}(x) = k$ and 
\[
P_{k+1}^{(a,b)}(x) = \frac{ax+b-k}{k+1} P_k^{(a,b)}(x)\ \ {\rm for\ all\ } k \in \N,
\]
so $P_k^{(a,b)}(x) \,\big|\, P_{k+1}^{(a,b)}(x)$, and $\cC_{a,b}$ is factorial. By Chu-Vandermonde's identity, 
\begin{eqnarray*}
E P_k^{(a,b)}(x) &=&  P_k^{(a,b)}(x+1)\ =\ \binom{ax+a+b}{k}\ =\ \,\sum_{i=0}^a \binom{a}{i}\binom{ax+b}{k-i} \\
&=& \sum_{i=-a}^0 \binom{a}{-i}\binom{ax+b}{k+i}\ =\ \sum_{i=-a}^0 \binom{a}{-i}P_{k+i}^{(a,b)}(x),
\end{eqnarray*}
so $\cC_{a,b}$ is $(a,0)$-compatible with $E$
($\alpha_{k,i} = \binom{a}{-i}$ for $i = -a, -a+1, \ldots, 0$). 
\qed
\end{proof}

\begin{definition}
\rm
Let $m \in \N \setminus \{0\}$, and for $i = 1, 2, \ldots, m$, let $\cB_i = \langle P_k^{(i)}(x) \rangle_{k=0}^\infty$ be a basis of $\K[x]$. For all $k \in \N$ and $j \in \{0,1,\ldots,m-1\}$, let
\[
P_{m k + j}^{(\pi)}(x)\ :=\ \prod_{i=1}^j P_{k+1}^{(i)}(x)\cdot \prod_{i=j+1}^m  P_k^{(i)}(x).
\]
Then the sequence $\prod_{i=1}^m \cB_i := \langle P_n^{(\pi)}(x) \rangle_{n=0}^\infty$ is the \emph{product} of $\cB_1, \cB_2, \ldots, \cB_m$. \qed
\end{definition}

\begin{theorem}
\label{prod}
Let $\cB_1, \cB_2, \ldots, \cB_m$ be factorial bases of $\K[x]$, and $L \in {\cal L}_{\K[x]}$.
\begin{enumerate}
\item $\prod_{i=1}^m \cB_i$ is a factorial basis of $\K[x]$.
\item Let $L$ be an endomorphism of $\K[x]$, and let each $\cB_i$ be $(A_i, B_i)$-compatible with $L$.  Write $A = \max_{1 \le i \le m} A_i$ and $B = \min_{1 \le i \le m} B_i$. Then $\prod_{i=1}^m \cB_i$ is $(m A, B)$-compatible with $L$.
\end{enumerate}
\end{theorem}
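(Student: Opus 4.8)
The plan is to handle the two parts separately, and for part 2 to reduce the claim to the equivalent conditions \textbf{C1} and \textbf{C2} of Proposition \ref{compat}, using crucially that $L$ is a $\K$-algebra endomorphism (so $L(fg)=L(f)L(g)$), which lets $L$ be pushed through the product defining $P_n^{(\pi)}$. Throughout I write $n=mk+j$ with $0\le j\le m-1$, so that $P_{mk+j}^{(\pi)} = \prod_{i=1}^j P_{k+1}^{(i)}\cdot\prod_{i=j+1}^m P_k^{(i)}$.

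For part 1 I would check \textbf{P1} and \textbf{P2} directly from the definition of the product basis. Property \textbf{P1} follows by adding the degrees of the $m$ factors: $j$ of them have degree $k+1$ and $m-j$ have degree $k$, for a total of $mk+j=n$. For \textbf{P2} I would compare $P_n^{(\pi)}$ with $P_{n+1}^{(\pi)}$ and observe that they differ in exactly one factor: if $j<m-1$ the factor $P_k^{(j+1)}$ is replaced by $P_{k+1}^{(j+1)}$, while if $j=m-1$ the factor $P_k^{(m)}$ is replaced by $P_{k+1}^{(m)}$; all remaining factors coincide. In either case \textbf{P2} for the single basis $\cB_{j+1}$ (resp.\ $\cB_m$) supplies the missing divisibility, so $P_n^{(\pi)}\mid P_{n+1}^{(\pi)}$.

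For part 2 the starting point is $L P_{mk+j}^{(\pi)} = \prod_{i=1}^j L P_{k+1}^{(i)}\cdot\prod_{i=j+1}^m L P_k^{(i)}$, which holds by multiplicativity. I would then verify \textbf{C1} and \textbf{C2} for $\prod_{i=1}^m\cB_i$ with the bounds $(mA,B)$. For \textbf{C2}, assume $n=mk+j\ge mA$; since $0\le j<m$ this forces $k\ge A\ge A_i$ for every $i$, so \textbf{C2} for each $\cB_i$ gives $P_{k+1-A_i}^{(i)}\mid L P_{k+1}^{(i)}$ (for $i\le j$) and $P_{k-A_i}^{(i)}\mid L P_k^{(i)}$ (for $i>j$). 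Because $A\ge A_i$, property \textbf{P2} yields $P_{k-A+1}^{(i)}\mid P_{k+1-A_i}^{(i)}$ and $P_{k-A}^{(i)}\mid P_{k-A_i}^{(i)}$, with all indices nonnegative. Multiplying these over $i$ produces exactly $\prod_{i=1}^j P_{k-A+1}^{(i)}\cdot\prod_{i=j+1}^m P_{k-A}^{(i)} = P_{m(k-A)+j}^{(\pi)} = P_{n-mA}^{(\pi)}$, which therefore divides $L P_n^{(\pi)}$.

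The delicate point is \textbf{C1}, the forward-degree bound, and I expect it to be the main obstacle. Summing the per-factor bounds $\deg L P_\ell^{(i)}\le\ell+B_i$ would only give $\deg L P_n^{(\pi)}\le n+\sum_i B_i$, far weaker than the required $n+B=n+\min_i B_i$. The resolution is to use multiplicativity once more: a $\K$-algebra endomorphism of $\K[x]$ is a substitution $f\mapsto f(p)$ with $p:=Lx$, and \textbf{C1} for any single $\cB_i$ gives $k\deg p=\deg L P_k^{(i)}\le k+B_i$ for all $k$, forcing $\deg p\le 1$; hence $L$ never raises degree, i.e.\ $\deg L f\le\deg f$ for all $f\in\K[x]$. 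Consequently $\deg L P_n^{(\pi)} = \sum_{i=1}^j \deg L P_{k+1}^{(i)} + \sum_{i=j+1}^m \deg L P_k^{(i)} \le \sum_{i=1}^j(k+1)+\sum_{i=j+1}^m k = mk+j = n \le n+B$, since $B=\min_i B_i\ge 0$. With \textbf{C1} and \textbf{C2} verified, Proposition \ref{compat} delivers $(mA,B)$-compatibility of $\prod_{i=1}^m\cB_i$ with $L$.
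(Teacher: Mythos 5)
Your proof is correct. Part 1 and the \textbf{C2} half of Part 2 coincide with the paper's argument: the same degree count $j(k+1)+(m-j)k = mk+j$, the same one-factor-changes observation for \textbf{P2}, and the same use of $A \ge A_i$ together with \textbf{P2} and multiplicativity of $L$ to get $P_{n-mA}^{(\pi)} \mid L P_n^{(\pi)}$ for $n \ge mA$. The genuinely different step is \textbf{C1}, and it is worth comparing the two routes. The paper does not use multiplicativity there at all: it expands an \emph{arbitrary} $p \in \K[x]$ in each basis $\cB_i$ and applies \textbf{C1} for $\cB_i$ term by term, obtaining $\deg Lp \le \deg p + B_i$ for every polynomial $p$ and every $i$, hence $\deg Lp \le \deg p + B$ with $B = \min_i B_i$; specializing $p = P_k^{(\pi)}$ gives \textbf{C1} for the product basis. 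This is where the $\min$ in the statement comes from, and the argument would work for any $L \in {\cal L}_{\K[x]}$ compatible with each $\cB_i$, endomorphism or not. You instead invoke the structure of $\K$-algebra endomorphisms of $\K[x]$: $L$ is substitution $f \mapsto f(p)$ with $p = Lx$, and \textbf{C1} for a single $\cB_i$ forces $\deg p \le 1$, so $L$ never raises degree; thus the product basis is in fact $(mA,0)$-compatible, which trivially implies $(mA,B)$-compatibility by padding with zero coefficients. Your route is sharper --- it shows $B$ can be replaced by $0$ whenever $L$ is an endomorphism, consistent with Proposition \ref{cab} where $B=0$ --- at the price of using the endomorphism hypothesis in a step where the paper needs only linearity. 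Two small points you glossed over, neither a real gap: a multiplicative $\K$-linear map with $L1 \ne 1$ must satisfy $L1 = 0$ and hence $L = 0$, for which the claim is trivial; and the identity $\deg P_k^{(i)}(p) = k \deg p$ presupposes $\deg p \ge 1$ (if $p$ is constant, the needed bound is immediate anyway).
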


\begin{proof}
\begin{enumerate}
\item Clearly $\deg P_{m k + j}^{(\pi)} =  j(k+1) + (m-j)k = mk+j$.

If $n = mk+j$ with $0 \le j \le m-2$, then $n+1 = mk+(j+1)$ and
\[
\frac{P_{n+1}^{(\pi)}}{P_n^{(\pi)}}\ =\ \frac{\prod_{i=1}^{j+1} P_{k+1}^{(i)}}{\prod_{i=1}^j P_{k+1}^{(i)}} \cdot \frac{\prod_{i=j+2}^m  P_k^{(i)}}{\prod_{i=j+1}^m  P_k^{(i)}}\ =\ \frac{ P_{k+1}^{(j+1)}}{P_k^{(j+1)}} 
\ \in\ \K[x]
\]
as $\cB_{j+1}$ is factorial. If $n = mk+(m-1)$, then $n+1 = m(k+1) + 0$ and
\[
\frac{P_{n+1}^{(\pi)}}{P_n^{(\pi)}}\ =\ \frac{1}{\prod_{i=1}^{m-1} P_{k+1}^{(i)}} \cdot \frac{\prod_{i=1}^m  P_{k+1}^{(i)}}{P_k^{(m)}}\ =\ \frac{ P_{k+1}^{(m)}}{P_k^{(m)}} 
\ \in\ \K[x]
\]
because $\cB_{m}$ is factorial. Hence $\prod_{i=1}^m \cB_i$ is factorial as well.

\item Let $p \in \K[x]$ be arbitrary. For $i = 1,2,\ldots,m$, let $p = \sum_{k=0}^{\deg p} c_k^{(i)} P_k^{(i)}$ be the expansion of $p$ w.r.t.\ $\cB_i$. Then  $L p = \sum_{k=0}^{\deg p} c_k^{(i)} L P_k^{(i)}$, and by condition \textbf{C1} of Proposition \ref{compat},
\[
\deg L p\ \le \max_{0 \le k \le \deg p} \deg L P_k^{(i)} \le \max_{0 \le k \le \deg p} (k + B_i)\ =\ \deg p + B_i.
\]
Since this holds for all $i$, we have $\deg L p\ \le\ \deg p + B$ for all $p \in \K[x]$. In particular, $\deg L P_k^{(\pi)}\ \le\ k + B$, so $\prod_{i=1}^m \cB_i$ satisfies \textbf{C1}.

Condition \textbf{C2} of Proposition \ref{compat} and our definition of $A$ imply that $P_{k+1-A}^{(i)} \,\big|\, L P_{k+1}^{(i)}$ and $P_{k-A}^{(i)} \,\big|\, L P_{k}^{(i)}$ for all $k \ge A$ and $i \in \{1,2\ldots,m\}$,~so
\[
 P_{m (k-A) + j}^{(\pi)}\ =\ \prod_{i=1}^j P_{k+1-A}^{(i)}\cdot \prod_{i=j+1}^m  P_{k-A}^{(i)} \ \ \bigg|\ \ \prod_{i=1}^j L P_{k+1}^{(i)}\cdot \prod_{i=j+1}^m  L P_k^{(i)},
\]
or equivalently, since $L$ is an endomorphism of the ring $\K[x]$,
\[
 P_{m (k-A) + j}^{(\pi)} \ \bigg|\  L \left(\prod_{i=1}^j P_{k+1}^{(i)}\cdot \prod_{i=j+1}^m  P_k^{(i)}\right) \ =\ L  P_{m k + j}^{(\pi)}. 
\]
For $n = mk+j \ge m A$, this turns into $P_{n - m A}^{(\pi)}\ \big|\,\ L  P_{n}^{(\pi)}$, so $\prod_{i=1}^m \cB_i$ satisfies \textbf{C2} as well. By Proposition \ref{compat}, this proves the claim.  \qed

\end{enumerate}
\end{proof}

\begin{definition}
\label{def:caabb} \rm
Let $m \in \N \setminus \{0\}$, and let $\mathbf{a} = (a_1,a_2,\ldots,a_m)$, $\mathbf{b} = (b_1,b_2,\ldots,b_m)$ where $a_i \in \N \setminus \{0\}$, $b_i \in \K$ for $i = 1,2,\ldots,m$. We denote the product of generalized binomial-coefficient bases $\prod_{i=1}^m \cC_{a_i, b_i}$ by $\cC_{\mathbf{a},\mathbf{b}}$, and call it a \emph{product binomial-coefficient basis} of $\K[x]$. \qed
\end{definition}

\begin{corollary}
\label{cor:caabb} \rm
Any product binomial-coefficient basis $\cC_{\mathbf{a},\mathbf{b}}$  is a factorial basis of $\K[x]$ which is $(m A, 0)$-compatible with $E$, where $A = \max_{1\le i\le m}  a_i$.
\end{corollary}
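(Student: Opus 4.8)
The plan is to read off this corollary as an immediate specialization of Theorem \ref{prod}, applied to the shift operator $E$ and to the constituent bases $\cC_{a_i,b_i}$. First I would recall from Definition \ref{def:caabb} that $\cC_{\mathbf{a},\mathbf{b}}$ is by definition the product $\prod_{i=1}^m \cC_{a_i,b_i}$, so everything reduces to knowing the relevant properties of the individual factors together with how those properties behave under products. No new computation on the polynomials $\binom{a_i x + b_i}{k}$ themselves should be needed, since all of that work is already packaged in Proposition \ref{cab}.

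For the factorial claim, Proposition \ref{cab} tells us that each $\cC_{a_i,b_i}$ is a factorial basis of $\K[x]$, and part 1 of Theorem \ref{prod} asserts that a product of factorial bases is again factorial; combining these two facts gives that $\cC_{\mathbf{a},\mathbf{b}}$ is factorial with no further argument. For the compatibility claim, I would invoke part 2 of Theorem \ref{prod} with $L = E$. Proposition \ref{cab} states that each $\cC_{a_i,b_i}$ is $(a_i,0)$-compatible with $E$, so in the notation of Theorem \ref{prod} we have $A_i = a_i$ and $B_i = 0$ for every $i$. Hence $A = \max_{1\le i\le m} A_i = \max_{1\le i\le m} a_i$ and $B = \min_{1\le i\le m} B_i = 0$, and the theorem yields $(mA, 0)$-compatibility of the product, exactly as claimed.

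The one hypothesis of Theorem \ref{prod} part 2 that must be checked separately, and hence the only point requiring any care, is that $L$ be an endomorphism of the ring $\K[x]$: the proof of that part relies on the multiplicativity $L\bigl(\prod_i P^{(i)}\bigr) = \prod_i L P^{(i)}$, which is a genuine ring-homomorphism property and does not follow from compatibility alone. For $L = E$ this is the routine observation that $E(pq)(x) = (pq)(x+1) = p(x+1)\,q(x+1) = (Ep)(x)\,(Eq)(x)$, so $E$ is indeed a ring homomorphism of $\K[x]$, and all hypotheses of Theorem \ref{prod} are met. I expect no genuine obstacle here; the entire content of the corollary is the bookkeeping of specializing $A_i \mapsto a_i$, $B_i \mapsto 0$ and confirming this single endomorphism condition.
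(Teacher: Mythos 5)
Your proof is correct and follows exactly the paper's route: the paper's own proof is the one-line ``Use Theorem \ref{prod} and Proposition \ref{cab}.'' Your explicit verification that $E$ is a ring endomorphism of $\K[x]$ (the hypothesis of part 2 of Theorem \ref{prod}) is a point the paper leaves implicit, and checking it is good practice, but it does not change the argument.
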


\begin{proof}
\noindent
Use Theorem \ref{prod} and Proposition \ref{cab}. \qed
\end{proof}

\section{Expansions in binomial-coefficient bases}

By Corollary \ref{cor:caabb}, we now have at our disposal a rich family of factorial bases  $\cC_{\mathbf{a},\mathbf{b}}$, compatible with any operator $L \in \K[x]\langle E\rangle$, which, given $L$ and $F(n,k)$ of the form (\ref{binom}), can be used to find solutions of the form $y_n = \sum_{k=0}^\infty F(n,k)\, h_k$ of $L y = 0$. To this end, we need to compute expansions of $E P_n(x)$ and $X P_n(x)$ in the basis  $\cC_{\mathbf{a},\mathbf{b}}=  \langle P_n(x)\rangle_{n=0}^\infty$.

\begin{example}
\label{xk2E}
\rm
For the simplest nontrivial example, take $F(n,k) = \binom{n}{k}^2$. The polynomial basis to be used here is $\cC_{(1,1),(0,0)} = \langle P_n(x)\rangle_{n=0}^\infty$ where for all $k \in \N$,
\[
P_{2k}(x) =  \binom{x}{k}^2, \quad P_{2k+1}(x) =  \binom{x}{k+1}  \binom{x}{k}.
\]
According to Corollary \ref{cor:caabb}, $\cC_{(1,1),(0,0)}$ is a factorial basis of $\K[x]$ with $m = 2$ and $A = \max\{1,1\} = 1$, so it is $(2,0)$-compatible with $E$. In particular, this means that $P_{2k}(x+1)$ can be expressed as a linear combination of $P_{2k}(x)$, $P_{2k-1}(x)$ and $P_{2k-2}(x)$, and $P_{2k+1}(x+1)$ as a linear combination of $P_{2k+1}(x)$, $P_{2k}(x)$ and $P_{2k-1}(x)$, with coefficients depending on $k$. In the case of $P_{2k}(x+1)$ this is just an application of Pascal's rule:
\begin{eqnarray}
P_{2k}(x+1) &=&  \binom{x+1}{k}^2\ =\ \left[\binom{x}{k} +  \binom{x}{k-1}\right]^2 \nonumber\\
&=&  \binom{x}{k}^2 +  2 \binom{x}{k}\binom{x}{k-1} + \binom{x}{k-1}^2 \nonumber\\[3pt]
&=& P_{2k}(x) + 2  P_{2k-1}(x) +  P_{2k-2}(x). \label{p2k}
\end{eqnarray}
 In the case of $P_{2k+1}(x+1)$, we can use the method of undetermined coefficients. Dividing both sides of
\begin{eqnarray*}
P_{2k+1}(x+1) &=&  u(k) P_{2k+1}(x) + v(k)  P_{2k}(x) +  w(k) P_{2k-1}(x), {\rm\ \ or} \\
\binom{x+1}{k+1} \binom{x+1}{k} &=& u(k) \binom{x}{k+1}\binom{x}{k} +  v(k) \binom{x}{k}^2 + w(k) \binom{x}{k}\binom{x}{k-1}
\end{eqnarray*}
where $u(k), v(k), w(k)$ are undetermined functions of $k$, by $\binom{x}{k}\binom{x}{k-1}$, yields
\begin{equation}
\label{undetcoef}
\frac{(x+1)^2}{k(k+1)}\ =\ u(k) \frac{(x-k+1)(x-k)}{k(k+1)} + v(k) \frac{x-k+1}{k} + w(k),
\end{equation}
which is an equality of two quadratic polynomials from $\K(k)[x]$. Plugging in the values $x = -1, k, k-1$, we obtain a triangular system of linear equations
\[
\begin{array}{rcc}
u(k) - v(k) + w(k) &=& 0 \\[2pt]
\frac{1}{k} v(k) + w(k) &=& \frac{k+1}{k} \\[3pt]
w(k) &=& \frac{k}{k+1}
\end{array}
\]
whose solution is $u(k) = 1$ (as expected), $v(k) = \frac{2k+1}{k+1}$,  $w(k) = \frac{k}{k+1}$, and so
\begin{equation}
\label{p2k1}
P_{2k+1}(x+1)\ =\  P_{2k+1}(x) + \frac{2k+1}{k+1}  P_{2k}(x) +  \frac{k}{k+1} P_{2k-1}(x).
\end{equation}

For the expansion of $X P_n(x)$, recall that every factorial basis is $(0,1)$-compatible with $X$. Indeed, as $x \binom{x}{k} =  k\binom{x}{k} + (k+1)\binom{x}{k+1}$, we have
\begin{align*}
x P_{2k}(x)\ &=\ \binom{x}{k} \left[x \binom{x}{k}\right]\ =\ (k+1) P_{2k+1}(x) + k P_{2k}(x), \\
x P_{2k+1}(x)\ &=\ \binom{x}{k+1} \left[x \binom{x}{k}\right]\ =\ (k+1) P_{2k+2}(x) + k P_{2k+1}(x).
\end{align*}
\qed
\end{example}

\label{expansion}
In the general case $F(n,k) = \prod_{i=1}^m \binom{a_i n + b_i}{k}$, we use the basis $\cC_{\mathbf{a},\mathbf{b}} = \langle P_n^{(\pi)}(x)\rangle_{n=0}^\infty$ which, by Corollary \ref{cor:caabb}, is $(m A, 0)$-compatible with $E$ where $A = \max_{1\le i\le m} a_i$. In order to compute $\alpha_{k,j,i} \in \K(k)$ such that
\[
P_{mk+j}^{(\pi)}(x+1)\ =\ \sum_{i=0}^{mA} \alpha_{k,j,i} P_{mk+j-i}^{(\pi)}(x)
\]
for all $k \in \N$ and $j \in \{0,1,\ldots,m-1\}$, we divide both sides of this equation by $P_{mk+j-mA}^{(\pi)}(x)$ which turns it into an equality of two polynomials of degree $mA$ from $\K(k)[x]$. From this equality a system of $mA+1$ linear algebraic equations for the $mA+1$ undetermined coefficients $\alpha_{k,j,i}$, $i = 0, 1, \ldots, mA$, can be obtained by equating the coefficients of like powers of $x$ on both sides, or (as in Example \ref{xk2E}) by substituting $mA+1$ distinct values from $\K(k)$ for $x$ in this equality. Note that for each $j \in \{0,1,\ldots,m-1\}$, this system is uniquely solvable since $\cC_{\mathbf{a},\mathbf{b}}$ is a basis of $\K[x]$, that the $\alpha_{k,j,i}$ will be rational functions of $k$, and that, as the shift operator preserves leading coefficients and degrees of polynomials, $\alpha_{k,j,0} = 1$.

\begin{example}
\label{expandE}
\rm
Here we give some additional examples of expansions of a shifted basis element in the basis $\cC_{\mathbf{a},\mathbf{b}}$, computed by the procedure just described.

\medskip
\noindent
(a)\ \ $\cC_{\mathbf{a},\mathbf{b}} = \cC_{(2,3),(0,0)} = \langle P_n(x)\rangle_{n=0}^\infty$ where for all $k \in \N$,
\[
P_{2k}(x) =  \binom{2x}{k} \binom{3x}{k}, \quad P_{2k+1}(x) =  \binom{2x}{k+1} \binom{3x}{k}
\]
Here $m=2$, $A = 3$, $mA = 6$ and
\begin{eqnarray*}
\lefteqn{P_{2 k}(x + 1)\ =\ P_{2 k}(x)\ +\ 6 P_{2 k-1}(x)}\\
&+& \frac{3 (7 k-3)}{2 k} P_{2 k-2}(x)\ +\ \frac{131 k-64}{12 k} P_{2 k-3}(x) \\
&+& \frac{211 k^2-374 k+120}{36 (k-1) k} P_{2 k-4}(x)\ +\ \frac{2 (2 k-3)}{9(k-1)} P_{2 k-5}(x), \\
\lefteqn{P_{2 k+1}(x + 1)\ =\ P_{2 k+1}(x)\ +\ \frac{2 (2 k+1)}{k+1} P_{2 k}(x)\ +\ \frac{17 k+7}{2 (k+1)} P_{2 k-1}(x)}\\
&+& \frac{131 k^2-6 k-17}{18 k (k+1)} P_{2 k-2}(x) 
\ +\ \frac{2 \left(10 k^2-6 k-1\right)}{9 k (k+1)} P_{2 k-3}(x)\\
&+& \frac{4 (k-2) (2 k-3)}{27 (k-1) (k+1)} P_{2 k-4}(x) -\frac{2 k (2 k-3)}{27 (k-1)(k+1)}  P_{2 k-5}(x).
\end{eqnarray*}

\bigskip
\noindent
(b)\ \ $\cC_{\mathbf{a},\mathbf{b}} = \cC_{(2,3),(-1,4)} = \langle P_n(x)\rangle_{n=0}^\infty$ where for all $k \in \N$,
\[
P_{2k}(x) =  \binom{2x-1}{k} \binom{3x+4}{k}, \quad P_{2k+1}(x) =  \binom{2x-1}{k+1} \binom{3x+4}{k}
\]
Here $m=2$, $A = 3$, $mA = 6$ and
\begin{eqnarray*}
\lefteqn{P_{2 k}(x + 1)\ =\ P_{2 k}(x)\ +\ 6 P_{2 k-1}(x)}\\
&+& \frac{21 k+13}{2 k} P_{2 k-2}(x)\ +\ \frac{131 k-97}{12 k} P_{2 k-3}(x) \\
&+& \frac{211 k^2+330 k+791}{36 (k-1) k} P_{2 k-4}(x)\ +\ \frac{2 (k-7) (2 k-11)}{9 (k-1) k} P_{2 k-5}(x), \\
\lefteqn{P_{2 k+1}(x + 1)\ =\ P_{2 k+1}(x)\ +\ \frac{2 (2 k+1)}{k+1} P_{2 k}(x)\ +\ \frac{17 k-15}{2 (k+1)} P_{2 k-1}(x)}\\
&+& \frac{131 k^2-39 k+214}{18 k (k+1)} P_{2 k-2}(x) 
\ +\ \frac{4 \left(5 k^2-47 k+104\right)}{9 k (k+1)} P_{2 k-3}(x)\\
&+& \frac{4 (k-7) (k-2) (2 k-11)}{27 (k-1) k (k+1)} P_{2 k-4}(x) -\frac{2 (k-7) (k+11) (2 k-11)}{27 (k-1) k (k+1)}  P_{2 k-5}(x).
\end{eqnarray*}

\bigskip
\noindent
(c)\ \ $\cC_{\mathbf{a},\mathbf{b}} = \cC_{(4,4),(0,0)} = \langle P_n(x)\rangle_{n=0}^\infty$ where for all $k \in \N$,
\[
P_{2k}(x) =  \binom{4x}{k}^2, \quad P_{2k+1}(x) =  \binom{4x}{k+1} \binom{4x}{k}
\]
Here $m=2$, $A = 4$, $mA = 8$ and
\begin{eqnarray*}
\lefteqn{P_{2 k}(x + 1)\ =\ P_{2 k}(x)\ +\ 8 P_{2 k-1}(x)\ +\ \frac{4 (7 k-3)}{k} P_{2 k-2}(x)}\\
&+& \frac{28 (2 k-1)}{k}  P_{2 k-3}(x)\ +\ \frac{2 \left(35 k^2-63 k+22\right)}{(k-1) k} P_{2 k-4}(x)\\
&+& \frac{8 \left(7 k^2-14 k+5\right)}{(k-1) k} P_{2 k-5}(x)
\ +\ \frac{4 \left(7 k^3-28 k^2+32 k-9\right)}{(k-2) (k-1) k} P_{2 k-6}(x)\\
&+& \frac{4 (2 k-3) \left(k^2-3 k+1\right)}{(k-2) (k-1) k} P_{2 k-7}(x)\ +\ P_{2 k-8}(x), \\
\lefteqn{P_{2 k+1}(x + 1)\ =\ P_{2 k+1}(x)\ +\ \frac{4 (2 k+1)}{k+1} P_{2 k}(x)\ +\ \frac{4 (7 k+3)}{k+1} P_{2 k-1}(x)}\\
&+& \frac{8 \left(7 k^2-1\right)}{k (k+1)} P_{2 k-2}(x) 
\ +\ \frac{2 \left(35 k^2-7 k-6\right)}{k (k+1)} P_{2 k-3}(x)\\
&+& \frac{4 (2 k-1) \left(7 k^2-7 k-2\right)}{(k-1) k (k+1)} P_{2 k-4}(x)
\ +\ \frac{4 \left(7 k^3-14 k^2+4 k+1\right)}{(k-1) k (k+1)}  P_{2 k-5}(x)\\
&+& \frac{8 (k-1)}{k+1} P_{2 k-6}(x)\ +\ \frac{k-3}{k+1}  P_{2 k-7}(x).
\end{eqnarray*}
\qed
\end{example}
As every factorial basis, $\cC_{\mathbf{a},\mathbf{b}} = \langle P_n^{(\pi)}(x)\rangle_{n=0}^\infty$ is $(0,1)$-compatible with $X$:

\begin{proposition}
\label{x}
For $k \in \N$ and $j \in \{0,1,\ldots,m-1\}$, let
\begin{equation}
\label{Pmkj}
P_{m k + j}^{(\pi)}(x)\ :=\ \prod_{i=1}^j \binom{a_i x+b_i}{k+1}\cdot \prod_{i=j+1}^m \binom{a_i x+b_i}{k}.
\end{equation}
Then
\[
x P_{m k + j}^{(\pi)}(x)\ =\ \frac{k+1}{a_{j+1}} P_{m k + j + 1}^{(\pi)}(x) + \frac{k-b_{j+1}}{a_{j+1}} P_{m k + j}^{(\pi)}(x).
\]
\end{proposition}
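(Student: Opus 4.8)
The plan is to exploit the fact that the consecutive basis elements $P_{mk+j}^{(\pi)}$ and $P_{mk+j+1}^{(\pi)}$ differ in exactly one factor. By (\ref{Pmkj}), both products share the factors indexed by $i\ne j+1$, while the $(j+1)$-th factor is $\binom{a_{j+1}x+b_{j+1}}{k}$ in the former and $\binom{a_{j+1}x+b_{j+1}}{k+1}$ in the latter. Writing $N := a_{j+1}x+b_{j+1}$ and using the elementary ratio identity $\binom{N}{k+1}\big/\binom{N}{k}=(N-k)/(k+1)$, I would first record the key relation
\[
P_{mk+j+1}^{(\pi)}(x)\ =\ \frac{N-k}{k+1}\, P_{mk+j}^{(\pi)}(x),
\]
which follows by cancelling the common factors in (\ref{Pmkj}) and applying the ratio identity to the single factor that changes.

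Next, the idea is to manufacture the multiplier $x$ on the left-hand side through a linear decomposition of $N$. Since $a_{j+1}x = N - b_{j+1} = (N-k) + (k-b_{j+1})$, multiplying $P_{mk+j}^{(\pi)}$ by $a_{j+1}x$ and grouping yields
\[
a_{j+1}\, x\, P_{mk+j}^{(\pi)}(x)\ =\ (N-k)\, P_{mk+j}^{(\pi)}(x)\ +\ (k-b_{j+1})\, P_{mk+j}^{(\pi)}(x).
\]
I would then substitute the key relation into the first summand, rewriting $(N-k)\,P_{mk+j}^{(\pi)} = (k+1)\,P_{mk+j+1}^{(\pi)}$, and finally divide throughout by $a_{j+1}$, which is a nonzero integer by Definition \ref{def:caabb}. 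This produces exactly the claimed formula.

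The whole argument is a short algebraic manipulation, so there is no genuine obstacle. The only point requiring a moment's care is the boundary case $j=m-1$, where the index $mk+j+1$ wraps around: here one reads $P_{mk+m}^{(\pi)}$ as $P_{m(k+1)+0}^{(\pi)}$, and the subscript $j+1=m$ means that $a_{j+1},b_{j+1}$ are played by $a_m,b_m$. Confirming that the single stated formula remains valid under this wraparound is what makes the proof uniform over all $j\in\{0,1,\ldots,m-1\}$.
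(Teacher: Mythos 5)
Your proposal is correct and rests on exactly the same key fact as the paper's proof, namely that $P_{mk+j+1}^{(\pi)}/P_{mk+j}^{(\pi)} = \binom{N}{k+1}/\binom{N}{k} = (N-k)/(k+1)$ with $N = a_{j+1}x+b_{j+1}$; you merely run the algebra forward from $a_{j+1}x\,P_{mk+j}^{(\pi)}$ via the split $a_{j+1}x = (N-k)+(k-b_{j+1})$, whereas the paper verifies that the right-hand side divided by $P_{mk+j}^{(\pi)}$ simplifies to $x$. Your explicit check of the wraparound case $j=m-1$ (reading $P_{mk+m}^{(\pi)}$ as $P_{m(k+1)}^{(\pi)}$, which is consistent with (\ref{Pmkj})) is a point the paper leaves implicit, and it is handled correctly.
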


\begin{proof}
\[
\frac{k+1}{a_{j+1}} P_{m k + j + 1}^{(\pi)}(x) + \frac{k-b_{j+1}}{a_{j+1}} P_{m k + j}^{(\pi)}(x)\ =\ P_{m k + j}^{(\pi)}(x) \cdot f(x)
\]
where
\begin{eqnarray*}
f(x) &=& \frac{k+1}{a_{j+1}} \cdot \frac{P_{m k + j + 1}^{(\pi)}(x)}{P_{m k + j}^{(\pi)}(x)} + \frac{k-b_{j+1}}{a_{j+1}}\\
&=& \frac{k+1}{a_{j+1}} \cdot \frac{\binom{a_{j+1} x+b_{j+1}}{k+1}}{\binom{a_{j+1} x+b_{j+1}}{k}} + \frac{k-b_{j+1}}{a_{j+1}}\\ 
&=& \frac{k+1}{a_{j+1}} \cdot \frac{a_{j+1} x + b_{j+1} - k}{k+1} + \frac{k-b_{j+1}}{a_{j+1}}\ =\ x. 
\end{eqnarray*}
\qed
\end{proof}

\section{Finding definite-sum solutions of recurrences}

Now we can use Proposition \ref{RBL} and the associated operator $\cR L$ where $\cB = \cC_{\mathbf{a},\mathbf{b}}$ to find $h$ such that $y_n = \sum_{k=0}^\infty F(n,k) h_k$, with $F$ as in (\ref{binom}), satisfies $Ly = 0$. Notice however that for $m > 1$, the coefficients $\alpha_{k,i}$ expressing the actions of $E$ resp.\ $X$ on $\cB$ are not rational functions of $k$ anymore, but conditional expressions evaluating to $m$ generally distinct rational functions, depending on the residue class of $k \bmod m$ (cf.\ Example \ref{expandE} and Proposition \ref{x}). So the coefficients of $\cR L$, obtained by composing the operators $\cR E$ and $\cR X$ repeatedly, will contain quite complicated conditional expressions. In addition, $\ord \cR L$ may exceed $\ord L$ by a factor of $mA$ which can be exponential in input size. Finally, only those $h \in \ker \cR L$ that satisfy $h_k = 0$ whenever $k \not\equiv 0 \pmod m$ (i.e., those that are interlacings of an arbitrary sequence with $m-1$ consecutive 0 sequences) give rise to elements of $\ker L$ that have the desired form. 

To overcome these inconveniences, for a fixed $m > 1$ we do not attempt to compute $\cR L$ directly but represent it by a matrix $\left[\cR L\right] = \left[L_{r,j}\right]_{r,j=0}^{m-1}$  of operators where $L_{r,j} \in \cE$ expresses the contribution of the $j$-th $m$-section  $s_j^m \sigma_\cB y$ of the coefficient sequence of $y$ to the $r$-th $m$-section $s_r^m \sigma_\cB (L y)$ of the coefficient sequence of $Ly$.

\begin{proposition}
\label{mainprop}
Let $L \in {\cal L}_{\K[x]}$,  $\cB = \langle P_n(x)\rangle_{n=0}^\infty$ {\rm (}a factorial basis of $\K[x]${\rm )}, $m \in \N \setminus \{0\}$, and  $A, B \in \N$ be such that for all $k \in \N$ and $j \in \{0,1,\ldots,m-1\}$,
\begin{equation}
\label{LPskj}
L P_{mk+j}(x) = \sum_{i=-A}^B \alpha_{k,j,i} P_{mk+j+i}(x).
\end{equation}
Furthermore, for all $r,j \in \{0,1,\ldots,m-1\}$ define
\begin{eqnarray}
\label{Lrjdef}
\ L_{r,j}\ := \!\!\!\!\!\sum_{\,-A \le i \le B \ \ \atop \, i+j \equiv r \!\!\!\!\!\pmod{\!m}}\!\!\!\!\!\alpha_{k+\frac{r-i-j}{m},j,i} E_k^{\frac{r-i-j}{m}} \in\ \cE 
\end{eqnarray}
{\rm (}to keep notation simple, we do not make the dependence of $ L_{r,j}$ on $m$ explicit{\rm )}.
Then for every $y \in \K[[\cB]]$  and $r \in \{0,1,\ldots,m-1\}$,
\begin{equation}
\label{Lrj}
 s_r^m \sigma_\cB (L y)\ =\ \sum_{j=0}^{m-1} L_{r,j}\,  s_j^m \sigma_\cB y.
\end{equation}
\end{proposition}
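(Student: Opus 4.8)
The plan is to work directly with coefficient sequences and reduce the claim to a bookkeeping of indices. Write $y = \sum_{n=0}^\infty c_n P_n(x)$, so that $\sigma_\cB y = \langle c_n\rangle_{n\in\Z}$ with $c_n = 0$ for $n<0$. Applying $L$ termwise and splitting the summation index as $n = mk+j$ with $k\in\N$, $j\in\{0,1,\ldots,m-1\}$, relation (\ref{LPskj}) gives
\[
Ly\ =\ \sum_{k=0}^\infty \sum_{j=0}^{m-1} c_{mk+j} \sum_{i=-A}^B \alpha_{k,j,i}\, P_{mk+j+i}(x).
\]
Reading off the coefficient of $P_N(x)$ — a finite sum, since $-A\le i\le B$ and $k\ge0$ force only finitely many contributing terms, so no convergence issue arises — yields $\sigma_\cB(Ly) = \langle d_N\rangle_{N\in\Z}$ with
\[
d_N\ =\ \sum_{\substack{k,j,i\\ mk+j+i = N\\ -A\le i\le B}} \alpha_{k,j,i}\, c_{mk+j}.
\]

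First I would extract the $r$-th $m$-section, i.e.\ compute $d_{m\ell+r}$ for $\ell\in\N$. The constraint $mk+j+i = m\ell+r$ forces $i+j\equiv r\pmod m$, and for each pair $(i,j)$ meeting this congruence it determines $k = \ell+\frac{r-i-j}{m}$ uniquely as an integer. Substituting this value back, the summand index becomes $mk+j = m\ell+r-i$, so that
\[
d_{m\ell+r}\ =\ \sum_{j=0}^{m-1} \sum_{\substack{-A\le i\le B\\ i+j\equiv r\pmod m}} \alpha_{\ell+\frac{r-i-j}{m},\,j,\,i}\; c_{m\ell+r-i}.
\]
The only point requiring care here is that $\frac{r-i-j}{m}\in\Z$ precisely under the stated congruence, which is exactly the summation condition appearing in the definition (\ref{Lrjdef}) of $L_{r,j}$.

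Next I would evaluate the right-hand side of (\ref{Lrj}) directly. Writing $s_j^m\sigma_\cB y = \langle c_{mk+j}\rangle_k$ and applying $L_{r,j}$ from (\ref{Lrjdef}), the $\ell$-th entry of $L_{r,j}\,s_j^m\sigma_\cB y$ is
\[
\sum_{\substack{-A\le i\le B\\ i+j\equiv r\pmod m}} \alpha_{\ell+\frac{r-i-j}{m},\,j,\,i}\; c_{m\left(\ell+\frac{r-i-j}{m}\right)+j},
\]
since the shift $E_k^{(r-i-j)/m}$ sends the index $\ell$ to $\ell+\frac{r-i-j}{m}$ while the coefficient $\alpha_{k+\frac{r-i-j}{m},j,i}$ is evaluated at $k=\ell$. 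Because $m\left(\ell+\frac{r-i-j}{m}\right)+j = m\ell+r-i$, this matches $d_{m\ell+r}$ term by term after summing over $j$, which proves (\ref{Lrj}).

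The argument is essentially a relabeling, so there is no deep conceptual obstacle; the work lies entirely in keeping the three indices $k,j,i$ straight and in verifying that the shift exponents are integers exactly when the congruence $i+j\equiv r\pmod m$ holds. The one place to stay vigilant is the zero-padding convention ($c_n = 0$ and $P_n = 0$ for $n<0$): I would check that it is applied consistently on both sides, so that boundary terms with $m\ell+r-i<0$ vanish identically and do not spoil the term-by-term matching for small $\ell$.
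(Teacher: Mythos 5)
Your proof is correct and follows essentially the same route as the paper's: both expand $Ly$ via (\ref{LPskj}), split the summation by the residue class of $i+j \bmod m$, and match terms against the definition (\ref{Lrjdef}), with the same zero-padding observations handling boundary indices. The only cosmetic difference is that you extract the coefficient of $P_N(x)$ and compare sequences entrywise, whereas the paper reindexes the series itself by the substitution $k \mapsto k - \frac{i+j-r}{m}$.
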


\begin{proof}
Write $y(x) = \sum_{n=0}^\infty c_n P_n(x)$ as the sum of its $m$-sections
\begin{equation}
\label{y}
y(x) = \sum_{j=0}^{m-1} \sum_{k=0}^\infty c_{mk+j} P_{mk+j}(x) = \sum_{j=0}^{m-1} \sum_{k=0}^\infty (s_j^mc)_k P_{mk+j}(x).
\end{equation}
Then
\begin{align}
L y(x) &=\  \sum_{j=0}^{m-1} \sum_{k=0}^\infty (s_j^mc)_k L P_{mk+j}(x) = \sum_{j=0}^{m-1} \sum_{k=0}^\infty  (s_j^mc)_k \sum_{i=-A}^B \alpha_{k,j,i} P_{mk+j+i}(x) \label{Ly(x)} \\
&=\ \sum_{j=0}^{m-1}\sum_{r=0}^{m-1} \sum_{\,-A \le i \le B \ \ \atop \,i+j \equiv r \!\!\!\!\!\pmod{\!m}} \sum_{k=0}^\infty \alpha_{k,j,i}\, \left(s_j^mc\right)_k\,  P_{mk+i+j}(x) \label{residue} \\
&= \sum_{r,j=0}^{m-1} \sum_{\,-A \le i \le B \ \ \atop \, i+j \equiv r \!\!\!\!\!\pmod{\!m}} \sum_{k=\frac{i+j-r}{m}}^\infty \alpha_{k+\frac{r-i-j}{m},j,i}\, \left(s_j^mc\right)_{k+\frac{r-i-j}{m}}\,  P_{mk+r}(x)\label{subsk} \\
&= \sum_{r,j=0}^{m-1} \sum_{\,-A \le i \le B \ \ \atop \, i+j \equiv r \!\!\!\!\!\pmod{\!m}} \sum_{k=0}^\infty \alpha_{k+\frac{r-i-j}{m},j,i}\left(E_k^{\frac{r-i-j}{m}}s_j^mc\right)_k  P_{mk+r}(x) \label{zero}\\
&=\ \sum_{r=0}^{m-1} \sum_{k=0}^\infty \bigg(\sum_{j=0}^{m-1} L_{r,j}\, s_j^mc\bigg)_k\,  P_{mk+r}(x) \label{final}
\end{align}
where in (\ref{Ly(x)}) we used  (\ref{LPskj}), in (\ref{residue}) we reordered summation on $i$ with respect to the residue class of $i+j \bmod m$, (\ref{subsk}) was obtained by replacing $k$ with $k - \frac{i+j-r}{m}$, (\ref{zero}) by noting that 
\begin{eqnarray*}
k < 0\ \ \Longrightarrow\ \ mk+r < 0\ \ \Longrightarrow\ \ P_{mk+r}=0, \\
k < \frac{i+j-r}{m}\ \ \Longrightarrow\ \ \left(s_j^mc\right)_{k+\frac{r-i-j}{m}} = 0,
\end{eqnarray*}
and (\ref{final}) by using (\ref{Lrjdef}). Now the equality of $Ly(x)$ and the series in (\ref{final}) can be restated as (\ref{Lrj}). \qed
\end{proof}

\begin{corollary}
\label{cor}
Under the assumptions of Proposition \ref{mainprop}, 
\[
L y = 0\quad \Longleftrightarrow\quad  \forall r \in \{0,1,\ldots,m-1\}\!: \sum_{j=0}^{m-1} L_{r,j}\,  s_j^m \sigma_\cB y\ =\ 0.
\]
\end{corollary}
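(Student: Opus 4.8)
The plan is to obtain the equivalence directly from Proposition \ref{mainprop}, by combining it with two elementary facts: that the coefficient map $\sigma_\cB$ is injective, and that a sequence vanishes precisely when all of its $m$-sections vanish.

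First I would record that, since $\cB$ is a factorial basis, the polynomials $P_0, P_1, \ldots, P_n$ have pairwise distinct degrees $0, 1, \ldots, n$ and therefore form a $\K$-basis of the space of polynomials of degree at most $n$; hence every formal polynomial series in $\K[[\cB]]$ is determined by its coefficient sequence, so $\sigma_\cB$ is injective and sends the zero series to the zero sequence. Consequently $L y = 0$ in $\K[[\cB]]$ holds if and only if $\sigma_\cB(L y) = 0$ in $\K^{\Z}$.

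Next I would observe that the map $(k,r) \mapsto mk + r$ is a bijection of $\N \times \{0,1,\ldots,m-1\}$ onto $\N$, so that a sequence $c$ vanishes identically if and only if each of its $m$-sections $s_0^m c, s_1^m c, \ldots, s_{m-1}^m c$ is the zero sequence (the negative indices contribute nothing, being $0$ by the convention defining $\sigma_\cB$). Applying this to $c = \sigma_\cB(L y)$ shows that $\sigma_\cB(L y) = 0$ if and only if $s_r^m \sigma_\cB(L y) = 0$ for every $r \in \{0,1,\ldots,m-1\}$.

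Finally I would substitute the identity of Proposition \ref{mainprop}, namely $s_r^m \sigma_\cB(L y) = \sum_{j=0}^{m-1} L_{r,j}\, s_j^m \sigma_\cB y$, into this last condition, and chain the three equivalences to reach the claim. There is essentially no serious obstacle here, since the corollary is a formal consequence of the proposition; the only point deserving a moment of care is the justification that zeroing all $m$-sections is equivalent to zeroing the whole sequence, which rests on the index bijection above together with the tacit identification of $\sigma_\cB y$ with a one-sided sequence whose negative part is zero.
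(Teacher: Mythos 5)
Your proof is correct and takes exactly the route the paper intends: the corollary is stated there without proof, as an immediate consequence of Proposition \ref{mainprop}, and your three ingredients (injectivity of $\sigma_\cB$, the equivalence between a sequence vanishing and all of its $m$-sections vanishing via the bijection $(k,r)\mapsto mk+r$, and substitution of the identity (\ref{Lrj})) are precisely the tacit justification, already visible in the final display (\ref{final}) of the proposition's proof. Nothing is missing.
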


Note that for $m=1$, Corollary \ref{cor} and  Proposition \ref{mainprop} turn into Proposition \ref{RBL} and Proposition \ref{sigmaLy}, respectively (with $\alpha_{k,i} = \alpha_{k,0,i}$ and $\cR L = L_{0,0}$).

\begin{proposition}
\label{xrjcab}
Let $\cB = \cC_{\mathbf{a},\mathbf{b}}$. Then for all $r,j \in \{0,1,\ldots,m-1\}$,
\[
X_{r,j}\ =\ [r = j]\,\frac{k-b_{j+1}}{a_{j+1}} +
[r = 0 \land j = m - 1]\,\frac{k}{a_{j+1}}\,E_k^{-1} +
[r = j+1]\,\frac{k+1}{a_{j+1}}
\]
where 
\[
[\varphi]\ =\ \left\{
\begin{array}{ll}
1, & {\rm\ if\ } \varphi {\rm\ is\ true}, \\
0, & {\rm\ otherwise}
\end{array}
\right.
\]
is Iverson bracket.
\end{proposition}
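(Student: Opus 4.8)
The plan is to specialize the general formula (\ref{Lrjdef}) of Proposition \ref{mainprop} to the operator $L = X$, reading off the required compatibility coefficients directly from Proposition \ref{x}. First I would record that, by Example \ref{basis}, every factorial basis---in particular $\cC_{\mathbf{a},\mathbf{b}}$---is $(0,1)$-compatible with $X$, so in the notation of Proposition \ref{mainprop} we may take $A = 0$ and $B = 1$. Matching the two summands on the right-hand side of Proposition \ref{x} against the expansion (\ref{LPskj}) then yields immediately
\[
\alpha_{k,j,0}\ =\ \frac{k - b_{j+1}}{a_{j+1}}, \qquad \alpha_{k,j,1}\ =\ \frac{k+1}{a_{j+1}},
\]
with $\alpha_{k,j,i} = 0$ for every remaining $i$.

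Next I would substitute these coefficients into (\ref{Lrjdef}), where the sum now ranges only over $i \in \{0,1\}$ subject to the congruence $i + j \equiv r \pmod{m}$, and split into cases. For $i = 0$ the congruence forces $r = j$, the shift exponent $\frac{r-i-j}{m}$ vanishes, and the contribution is exactly the first term $[r = j]\,\frac{k-b_{j+1}}{a_{j+1}}$. For $i = 1$ the congruence reads $j+1 \equiv r \pmod{m}$, and this subdivides according to whether $j < m-1$ or $j = m-1$: in the former case $r = j+1$, the shift exponent again vanishes, and one gets $[r = j+1]\,\frac{k+1}{a_{j+1}}$.

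The case $j = m-1$, $i = 1$ is the one genuinely requiring care, and it is where the backward shift originates: here $j+1 = m \equiv 0$, so $r = 0$, the shift exponent equals $\frac{0 - 1 - (m-1)}{m} = -1$, and the coefficient must therefore be evaluated at $k-1$, giving $\alpha_{k-1,m-1,1} = \frac{k}{a_m} = \frac{k}{a_{j+1}}$ and hence the term $[r = 0 \land j = m-1]\,\frac{k}{a_{j+1}}\,E_k^{-1}$. To close the argument I would observe that the three displayed contributions already exhaust all cases without any explicit exclusion: when $j = m-1$ the bracket $[r = j+1]$ compares $r \in \{0,\ldots,m-1\}$ with $j+1 = m$ and is automatically $0$, so the third term silently drops out precisely when the $E_k^{-1}$ term is active. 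Assembling the three contributions then gives the claimed expression for $X_{r,j}$.
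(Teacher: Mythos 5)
Your proposal is correct and follows essentially the same route as the paper: it reads off $\alpha_{k,j,0} = \frac{k-b_{j+1}}{a_{j+1}}$ and $\alpha_{k,j,1} = \frac{k+1}{a_{j+1}}$ from Proposition~\ref{x}, specializes (\ref{Lrjdef}) with $A=0$, $B=1$, and resolves the congruence $i+j \equiv r \pmod{m}$ case by case, including the correct shift exponent $-1$ and coefficient $\alpha_{k-1,m-1,1} = \frac{k}{a_{j+1}}$ in the wrap-around case $j = m-1$. Your explicit case analysis (and the observation that $[r=j+1]$ vanishes automatically when $j=m-1$) merely spells out what the paper compresses into its equation (\ref{xrjm}) and the closing remark about $0 \le r,j \le m-1$.
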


\begin{proof}
From Proposition \ref{x} we read off that in this case
\begin{align}
\alpha_{k,j,0} &\ =\ \frac{k-b_{j+1}}{a_{j+1}}, \label{kj0} \\
\alpha_{k,j,1} &\ =\ \frac{k+1}{a_{j+1}}. \label{kj1}
\end{align}
From (\ref{Lrjdef}) with $A = 0$, $B = 1$ it follows that
\begin{align}
X_{r,j} &\ =\ [j\equiv r \!\!\!\!\!\pmod{m}]\,\alpha_{k+\frac{r-j}{m},j,0} E_k^{\frac{r-j}{m}}\nonumber  \\
&\ +\ [j\equiv r-1 \!\!\!\!\!\pmod{m}]\,\alpha_{k+\frac{r-j-1}{m},j,1} E_k^{\frac{r-j-1}{m}}. \label{xrjm}
\end{align}
Combining (\ref{kj0}) -- (\ref{xrjm}) with $0 \le r,j \le m-1$ yields the assertion. \qed
\end{proof}

\bigskip
\noindent
\textbf{Notation:} $\left[\cR L\right] := \left[L_{r,j}\right]_{r,j=0}^{m-1} \in M_m(\cE)$ where $L_{r,j}$ is as given in (\ref{Lrjdef}).

\begin{proposition}
Let $L^{(1)}, L^{(2)} \in \cL$. Then
\[
\left[\cR \left(L^{(1)} L^{(2)}\right)\right]\ =\ \left[\cR L^{(1)}\right]\!\!\left[\cR L^{(2)}\right].
\]
\end{proposition}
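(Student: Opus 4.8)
The plan is to evaluate $s_r^m \sigma_\cB\!\left(L^{(1)} L^{(2)} y\right)$ for an arbitrary $y \in \KK$ in two ways and compare. Throughout, write $L^{(\nu)}_{r,j}$ for the $(r,j)$ entry of $\left[\cR L^{(\nu)}\right]$. Since $\cL$ is a $\K$-algebra by Theorem \ref{iso}, the composite $L^{(1)} L^{(2)}$ again lies in $\cL$, so Proposition \ref{mainprop} applies to it; by (\ref{Lrj}),
\[
s_r^m \sigma_\cB\!\left(L^{(1)} L^{(2)} y\right)\ =\ \sum_{j=0}^{m-1}\left(L^{(1)} L^{(2)}\right)_{r,j}\, s_j^m \sigma_\cB y ,
\]
where $\left(L^{(1)} L^{(2)}\right)_{r,j}$ is the $(r,j)$ entry of $\left[\cR\!\left(L^{(1)} L^{(2)}\right)\right]$.

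For the second evaluation I would set $z := L^{(2)} y$, which again lies in $\KK$ because $L^{(2)} \in \cL$. Applying (\ref{Lrj}) first to the pair $\left(L^{(2)}, y\right)$ and then to the pair $\left(L^{(1)}, z\right)$ gives
\[
s_l^m \sigma_\cB z\ =\ \sum_{j=0}^{m-1} L^{(2)}_{l,j}\, s_j^m \sigma_\cB y , \qquad s_r^m \sigma_\cB\!\left(L^{(1)} z\right)\ =\ \sum_{l=0}^{m-1} L^{(1)}_{r,l}\, s_l^m \sigma_\cB z .
\]
Substituting the first identity into the second and using linearity of the $L^{(1)}_{r,l}$ together with associativity of composition in $\cE$, the iterated expression becomes $\sum_{j=0}^{m-1}\bigl(\sum_{l=0}^{m-1} L^{(1)}_{r,l} L^{(2)}_{l,j}\bigr)\, s_j^m \sigma_\cB y$, and the inner bracket is exactly the $(r,j)$ entry of the matrix product $\left[\cR L^{(1)}\right]\!\left[\cR L^{(2)}\right]$.

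Comparing the two evaluations shows that, for every $y$ and every $r$, the difference operators $M_{r,j} := \left(L^{(1)} L^{(2)}\right)_{r,j} - \sum_{l=0}^{m-1} L^{(1)}_{r,l} L^{(2)}_{l,j}$ satisfy $\sum_{j=0}^{m-1} M_{r,j}\, s_j^m \sigma_\cB y = 0$. To conclude $M_{r,j} = 0$ I would exploit that the $m$-sections are independent: every tuple of $\N$-supported sequences arises as $\bigl(s_0^m \sigma_\cB y, \ldots, s_{m-1}^m \sigma_\cB y\bigr)$ for a suitable interlaced $y$. Fixing $j_0$, choosing $y$ whose only nonzero section is $s_{j_0}^m \sigma_\cB y$, and letting that section run over unit (delta) sequences should pin down the coefficients of $M_{r,j_0}$ and force it to vanish in $\cE$, for all $r$ and $j_0$ — which is the asserted matrix identity.

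The hard part will be this last step, passing from the functional identity to an identity of operators. Equation (\ref{Lrj}) is only guaranteed at indices $k \ge 0$, and the intermediate identity for $z$ can genuinely fail at negative indices; since $L^{(1)}_{r,l}$ may carry negative shifts, substituting that identity under $L^{(1)}_{r,l}$ is legitimate only once $k$ is large enough for every shift to land in $\N$. The probing argument therefore yields vanishing of the coefficients of $M_{r,j}$ only for all sufficiently large $k$. To upgrade this to identical vanishing I would use that these coefficients are rational functions of $k$: they are assembled from the $\alpha_{k,j,i}$ through (\ref{Lrjdef}), and for the bases $\cC_{\mathbf{a},\mathbf{b}}$ of interest the $\alpha_{k,j,i}$ are rational in $k$ (cf.\ Proposition \ref{x} and Example \ref{expandE}), so a rational function vanishing for all large $k$ vanishes identically and $M_{r,j} = 0$.
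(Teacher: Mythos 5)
Your proof is correct and takes essentially the same route as the paper's: evaluate $s_t^m\sigma_\cB\!\left(L^{(1)}L^{(2)}y\right)$ once via Proposition \ref{mainprop} applied to the product (which lies in $\cL$ by Theorem \ref{iso}), and once by chaining (\ref{Lrj}) through $z = L^{(2)}y$, then compare the two expressions for arbitrary $y$. The paper compresses the final passage from the functional identity to the operator identity into the words ``which implies the claim''; your extra care at that point --- realizing arbitrary tuples of sections by interlacing, probing with delta sequences, noting that (\ref{Lrj}) is guaranteed only at nonnegative indices so that substituting it under negative shifts of $L^{(1)}_{r,l}$ is valid only for sufficiently large $k$, and closing the gap via rationality in $k$ of the coefficients for the bases $\cC_{\mathbf{a},\mathbf{b}}$ --- is a legitimate sharpening of a step the paper leaves implicit, and it matches the setting in which the result is actually used (the algorithm outputs operators over $\K(k)$).
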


\begin{proof}
Write $L = L^{(1)} L^{(2)}$. By (\ref{Lrj}),
\begin{eqnarray*}
s_t^m \sigma (L y) \ =\ \sum_{j=0}^{m-1} L_{t,j}\,  s_j^m \sigma y\ =\  \sum_{j=0}^{m-1}\left[\cR L\right]_{t,j} s_j^m \sigma y.
\end{eqnarray*}
On the other hand, by (\ref{Lrj}) applied to $L^{(1)}$ and $L^{(2)}$,
\begin{eqnarray*}
s_t^m \sigma (L y) &=&s_t^m \sigma (L^{(1)}L^{(2)} y)
\ =\ \sum_{r=0}^{m-1} L_{t,r}^{(1)}\, s_r^m \sigma (L^{(2)}y) \\
 &=& \sum_{r=0}^{m-1} L_{t,r}^{(1)}\sum_{j=0}^{m-1} L_{r,j}^{(2)}\,  s_j^m \sigma y 
\ = \sum_{j=0}^{m-1} \left(\sum_{r=0}^{m-1} L_{t,r}^{(1)} L_{r,j}^{(2)}\right) s_j^m \sigma y \\
&=& \sum_{j=0}^{m-1} \left(\left[\cR L^{(1)}\right]\!\!\left[\cR L^{(2)}\right]\right)_{t,j} s_j^m \sigma y.
\end{eqnarray*}
Hence
\begin{eqnarray*}
\sum_{j=0}^{m-1}\left[\cR L\right]_{t,j} s_j^m \sigma y\ = \sum_{j=0}^{m-1} \left(\left[\cR L^{(1)}\right]\!\!\left[\cR L^{(2)}\right]\right)_{t,j} s_j^m \sigma y
\end{eqnarray*}
for any $y \in \K[[\cB]]$, which implies the claim. \qed
\end{proof} 

It follows that to compute $[\cR L]$ for an arbitrary operator $L \in \K[x]\langle E\rangle$, it suffices to apply the substitution
\begin{equation}
\label{substitute}
\begin{array}{llc}
E & \mapsto & [\cR E], \\
x & \mapsto & [\cR X], \\
1 & \mapsto & I_m
\end{array}
\end{equation}
where $I_m$ is the $m \times m$ identity matrix, to all terms of $L$. Since we are interested in finding $y \in \ker L$ of the form 
\[
y(x)\ =\ \sum_{k=0}^\infty h_k \prod_{i=1}^m \binom{a_i x+b_i}{k}\ =\ \sum_{k=0}^\infty h_k P_{mk}(x),
\]
we have $s_0^m \sigma y = h$ and $s_j^m \sigma y = 0$ for all $j \ne 0$. For such $y$, Corollary \ref{cor} implies
\begin{eqnarray}
L y = 0 &\Longleftrightarrow&  \forall r \in \{0,1,\ldots,m-1\}\!: L_{r,0}\, h\ =\ 0 \nonumber \\
&\Longleftrightarrow&  {\rm gcrd}(L_{0,0}, L_{1,0},\ldots, L_{m-1,0})\, h\ =\ 0. \label{Lr0}
\end{eqnarray}
So any nonzero element of the first column of $[\cR L] = \left[L_{r,j}\right]_{r,j=0}^{m-1}$ may serve as a nontrivial annihilator $L'$ of $h$, and taking their greatest common right divisor might yield $L'$ of lower order. The fact that we only need the first column $[\cR L]e^{(1)}$ of $[\cR L]$ (where $e^{(1)} = (1,0,\ldots,0)^{\rm T}$ is the first standard basis vector of length $m$) can also be used to advantage in its computation.

In summary, we have the following algorithm:

\smallskip
\begin{center}
\large
\textbf{Algorithm} \textsc{DefiniteSumSolutions}
\end{center}

\smallskip
\textsc{Input:} \verb+   + $L \in \K[x]\langle E\rangle$, $m \in \N\setminus\{0\}$,

\verb+          + $a_1,a_2,\ldots,a_m \in  \N\setminus\{0\}$,  $b_1,b_2,\ldots,b_m \in \Z$

\medskip
\textsc{Output:} \verb+ + $L' \in \K(k)\langle E_k\rangle$ such that 
\[
L\left(\sum_{k=0}^\infty \prod_{i=1}^m \binom{a_i x + b_i}{k} h_k\right) = 0
\]

\verb+          + if and only if\ $L' h = 0$

\vskip 1pc
\begin{enumerate}
\item $A := \max_{1 \le i \le m} a_i$.
\item For $j = 0,1, \ldots,m-1$ let
\[
P_{mk+j}(x) \ :=\ \prod_{i=1}^j \binom{a_i x + b_i}{k+1} \cdot \prod_{i=j+1}^m \binom{a_i x + b_i}{k}.
\]
For $j = 0,1,\ldots,m-1$ compute $\alpha_{k,j,i} \in \K(k)$ such that
\[
P_{mk+j}(x+1)\ =\ \sum_{i = -m A}^0 \alpha_{k,j,i} P_{mk+j+i}(x)
\]
as explained below Example \ref{xk2E} on p.\ \pageref{expansion}.
\item For $r,j = 0,1,\ldots,m-1$ let
\begin{align*}
E_{r,j} &:= \!\!\!\!\!\sum_{\,-mA \le i \le 0 \ \ \atop \, i+j \equiv r \!\!\!\!\!\pmod{\!m}}\!\!\!\!\!\alpha_{k+\frac{r-i-j}{m},j,i} E_k^{\frac{r-i-j}{m}},\\
X_{r,j} &:= [r = j]\frac{k-b_{j+1}}{a_{j+1}} +
[r = 0 \land j = m - 1]\frac{k}{a_{j+1}}\,E_k^{-1} +
[r = j+1]\frac{k+1}{a_{j+1}}.
\end{align*}
Let $[\cR E] =\left[E_{r,j}\right]_{r,j=0}^{m-1}$,\ \ $[\cR X] =\left[X_{r,j}\right]_{r,j=0}^{m-1}$.

\item Let $[\cR L] =\left[L_{r,j}\right]_{r,j=0}^{m-1}$ be the matrix of operators obtained by applying substitution (\ref{substitute}) to $L$.

\item Return $L' := {\rm gcrd}(L_{0,0}, L_{1,0},\ldots, L_{m-1,0})$.
\end{enumerate}
\begin{remark}
In step 4 it suffices to compute $[\cR L]e^{(1)}$, the first column of  $[\cR L]$. We do so by proceeding from right to left through the expression obtained from $L$ by substitution (\ref{substitute}), multiplying a matrix with a vector at each point.
\end{remark}

\begin{example}\rm
Consider again the basis $\cB = \cC_{(1,1),(0,0)}$ from Example \ref{xk2E}.  Here $m=2$, $a_1 = a_2 = 1$, $b_1 = b_2 = 0$. To compute $[\cR E]$, comparing (\ref{LPskj}) with (\ref{p2k}) and (\ref{p2k1}) yields
\[
\begin{array}{ccccccc}
\alpha_{k,0,0} & = & 1 & \qquad & \alpha_{k,1,0} & = & 1 \\[2pt]
\alpha_{k,0,-1} & = & 2 & \qquad & \alpha_{k,1,-1} & = & \frac{2k+1}{k+1} \\[3pt]
\alpha_{k,0,-2} & = & 1 & \qquad & \alpha_{k,1,-2} & = & \frac{k}{k+1},
\end{array}
\]
hence by (\ref{Lrjdef})
\begin{equation}
\label{rbe}
\left[\cR E \right]\ =\ \left[
\begin{array}{cc}
E_{0,0} & E_{0,1} \\
E_{1,0} & E_{1,1}
\end{array}
\right]\ =\ \left[
\begin{array}{cc}
E_k +1 & \frac{2k+1}{k+1} \\
2 E_k & \frac{k+1}{k+2} E_k + 1
\end{array}
\right].
\end{equation}
For $[\cR X]$ it follows from Proposition \ref{xrjcab} that
\begin{equation}
\label{rbx}
\left[\cR X \right]\ =\ \left[
\begin{array}{cc}
X_{0,0} & X_{0,1} \\
X_{1,0} & X_{1,1}
\end{array}
\right]\ =\ \left[
\begin{array}{cc}
k & k E_k^{-1} \\
k+1 & k
\end{array}
\right].
\end{equation}

By way of example we now look at two operators $L \in \K[x]\langle E\rangle$ and compute their associated operators $[\cR L]$ (only the first column) and $L'$. The $2\times 2$ identity matrix is denoted by $I_2$.

\begin{enumerate}
\item $L = (n + 1) E - 2 (2 n + 1)$: Using (\ref{rbe}) and (\ref{rbx}), we obtain
\begin{align*}
[\cR L]e^{(1)}& = \left(([\cR X] + I_2)[\cR E] - 2 (2[\cR X] + I_2)\right)e^{(1)}\\
&= \left[
\begin{array}{c} 
(k+1)(E_k - 1) \\
3(k+1)(E_k - 1)
\end{array}
\right],
\end{align*}
so we can take $L' = E_k - 1$. This comes as no surprise, since $y_n = \sum_{k=0}^\infty \binom{n}{k}^2 = \binom{2n}{n}$ satisfies $Ly=0$.

\item $L = 4 (2 n+3)^2 (4 n+3) E^2 - 2 (4 n+5) \left(20 n^2+50 n+27\right) E + 9 (4 n+7) (n+1)^2$:

\smallskip\noindent
Applying substitution (\ref{substitute}) to $L$, we obtain
\begin{align*}
[\cR L]e^{(1)} &= \left[
\begin{array}{c} 
L_{0,0} \\
L_{1,0}
\end{array}
\right]
\end{align*}
where
\begin{align*}
L_{0,0} \ =\ &\ 4 (2k+3)^2 (4 k+3) E_k^2\\
&+\ \frac{2 \left(592 k^4+1388 k^3+1254 k^2+519 k+81\right)}{k+1}\, E_k\\
&+\ 676 k^3 -889 k^2-466 k-99-(244 k+41) k^2 E_k^{-1}, \\
L_{1,0} \ =\ 
& \,\frac{8 (2 k+3) \left(28 k^3+108 k^2+132 k+51\right)}{k+2}\,  E_k^2\\
& +\ 4 \left(360 k^3+720 k^2+451 k+82\right) E_k \\
& -\ 2 (k+1) \left(74 k^2+377 k+133\right) - 60 (k+1) k^2 E_k^{-1}.\\
\end{align*}
Taking
\[
L'\ =\ {\rm gcrd}\,\left(E_k\, L_{0,0}, E_k\, L_{1,0}\right)
\ =\ E_k-\frac{k+1}{2 (2 k+1)},
\]
we see that $h_k = \frac{1}{\binom{2k}{k}}$ satisfies $L'h = 0$, so
\[
y_n =  \sum_{k=0}^\infty \frac{\binom{n}{k}^2}{\binom{2k}{k}}
\]
is a definite-sum solution of equation $Ly = 0$. \qed
\end{enumerate}
\end{example}

\section{Concluding remarks}

\begin{enumerate}
\item The algorithm \textsc{DefiniteSumSolutions} can be used recursively to produce solutions of $Ly = 0$ in the form of \emph{nested} definite sums.

\item We only used the first column of $[\cR L]$, but this matrix provides information on \emph{all} definite-sum solutions of $Ly = 0$ having the form $y(x) = \sum_{n=0}^\infty c_n P_n(x)$ with $P_n(x) = P_{mk+j}^{(\pi)}(x)$ as given in (\ref{Pmkj}).

\item It seems likely that our approach can be generalized to summands of the form $F(n,k)h_k$ where
\begin{equation}
\label{Fnkgen}
F(n,k) \ =\ \prod_{i=1}^m \binom{a_i n + b_i}{c_i k + d_i}
\end{equation}
with $a_i, c_i \in \N \setminus \{0\}$, $b_i \in \Z$, $d_i \in \N$.

\item Could Galois theory of difference equations help to determine, given $L$, for which values of $m, a_i, b_i, c_i, d_i$ in (\ref{binom}) or (\ref{Fnkgen}) may the associated equation $L'h = 0$ possess nonzero explicitly representable (e.g., Liouvillian) solutions?
\end{enumerate}

\section*{Acknowledgements}

The author acknowledges financial support from the Slovenian Research Agency (research core funding No.\ P1-0294). The paper was started while he was attending the thematic programme ``Algorithmic and Enumerative Combinatorics'' at the Erwin Schr\"odinger International Institute for Mathematics and Physics in Vienna, Austria. He thanks the Institute for its support and hospitality.

\end{document}